\ifdefined\pdfminorversion
\fi
\documentclass[preprint,12pt]{elsarticle}

\usepackage[T1]{fontenc}
\usepackage[utf8]{inputenc}
\usepackage{lmodern}
\usepackage{amsmath,amssymb,amsfonts,amsthm,mathtools}
\usepackage{bm}
\usepackage{graphicx}
\usepackage{xcolor}
\usepackage{microtype}
\usepackage{booktabs}
\usepackage{tabularx}
\usepackage{array}
\usepackage{multirow}
\usepackage{siunitx}
\usepackage{enumitem}
\usepackage{algorithm}
\usepackage{float}
\usepackage[noend]{algpseudocode}
\usepackage{xspace}
\usepackage{tikz}
\usepackage{pgfplots}
\usepackage{hyperref}
\usepackage[nameinlink,capitalise,noabbrev]{cleveref}

\graphicspath{{figures/}}
\hypersetup{
  colorlinks=true,
  linkcolor=black,
  citecolor=black,
  urlcolor=black,
  pdfauthor={Arman Ferdowsi, Maryam DehghanChenary, Kevin Tierney, Atakan Aral},
  pdftitle={A Fault-Tolerant Spike-Time Interface for Approximate Agreement in Distributed Neuromorphic Systems}
}
\pgfplotsset{compat=1.18}
\usetikzlibrary{arrows.meta,positioning,calc,fit,shapes.geometric,decorations.pathreplacing}

\newtheorem{theorem}{Theorem}[section]
\newtheorem{lemma}[theorem]{Lemma}

\newtheorem{corollary}[theorem]{Corollary}
\theoremstyle{definition}
\newtheorem{definition}[theorem]{Definition}
\newtheorem{example}[theorem]{Example}
\newtheorem{observation}[theorem]{Observation}
\theoremstyle{remark}
\newtheorem{remark}[theorem]{Remark}

\newcommand{\eps}{\varepsilon}
\newcommand{\clip}{\operatorname{clip}}

\newcommand{\Dec}{\operatorname{Dec}}
\newcommand{\SpikeTrim}{\textnormal{\textsc{SpikeTrim}}\xspace}
\newcommand{\SIF}{\textnormal{\textsc{SIF}}\xspace}
\newcommand{\FirstSpike}{\textnormal{\textsc{FirstSpike}}\xspace}
\newcommand{\Byz}{\mathcal B}
\newcommand{\Correct}{\mathcal C}
\newcommand{\Nodes}{\mathcal V}

\crefname{algorithm}{Algorithm}{Algorithms}
\Crefname{algorithm}{Algorithm}{Algorithms}
\crefname{observation}{observation}{observations}
\Crefname{observation}{Observation}{Observations}

\setlist[itemize]{leftmargin=*,topsep=4pt,itemsep=2pt}
\setlist[enumerate]{leftmargin=*,topsep=4pt,itemsep=2pt}
\journal{Neurocomputing}
\biboptions{sort&compress}

\begin{document}

\makeatletter
\def\ps@pprintTitle{%
  \let\@oddhead\@empty
  \let\@evenhead\@empty
  \let\@oddfoot\@empty
  \let\@evenfoot\@empty
}
\makeatother

\begin{frontmatter}

\title{A Fault-Tolerant Spike-Time Interface for Approximate Agreement in Distributed Neuromorphic Systems}

\author[cs]{Arman Ferdowsi\corref{cor1}}
\ead{arman.ferdowsi@univie.ac.at}

\author[bda]{Maryam DehghanChenary}
\ead{maryam.dehghan.chenary@univie.ac.at}

\author[bda]{Kevin Tierney}
\ead{kevin.tierney@univie.ac.at}

\author[cs]{Atakan Aral}
\ead{atakan.aral@univie.ac.at}

\cortext[cor1]{Corresponding author}

\address[cs]{Faculty of Computer Science, University of Vienna, Vienna, Austria}

\address[bda]{Faculty of Business, Economics and Statistics, University of Vienna, Vienna, Austria}

\begin{abstract}
Large neuromorphic systems contain many processing tiles that may replicate a shared control parameter such as a threshold reference. If these copies diverge, identical inputs may be processed under different intended settings. We study how tiles can reduce this disagreement when communication carries only labeled spike times and up to \(f\) sender labels may be Byzantine. A raw event stream cannot supply the one-value-per-sender input required by classical approximate agreement because a faulty sender can remain silent, flood a receiver, or report different times to different receivers.

We introduce the Spike-time Interface for Faults, or \SIF, which combines paced epochs, sender attribution, per-label \FirstSpike admission, bounded timing error, and a silence sentinel. For an affine one-spike code, midpoint decoding attains the exact deterministic minimax error \(\rho=\min\{1/2,\omega/L\}\), where \(\omega\) is the residual timing uncertainty and \(L\) is the usable encoding window. \SpikeTrim applies the classical mean-subsequence-reduced (MSR) rule to the sender-indexed decoded values. For \(n\ge3f+1\), it guarantees one-step robust validity, the tight noiseless contraction factor \(f/(n-2f)\) under direct updates, an explicit worst-case asymptotic disagreement bound, and finite recovery after transient agreement-state corruption. A closed-form test determines whether a validated timing budget meets a target disagreement. Simulations illustrate the fault threshold, timing dependence, flooding resistance, and recovery. A controlled spiking classifier experiment shows an association between faster control-state alignment and lower prediction disagreement under a finite maintenance budget.
\end{abstract}

\begin{keyword}
neuromorphic computing \sep approximate agreement \sep Byzantine faults \sep spike-time coding \sep robust aggregation \sep agreement-layer self-stabilization \sep distributed systems
\end{keyword}

\end{frontmatter}

\section{Introduction}
\label{sec:introduction}
Modern neuromorphic processors distribute neurons, synapses, memory, and routing across processing units commonly called \emph{cores} or \emph{tiles}. In this paper, each participating tile is modeled as exactly one sender-labeled node. Communication is event-based and commonly carries source information through an address-event or network-on-chip fabric~\cite{Boahen2000AER,Davies2018Loihi,Merolla2014TrueNorth,Purohit2022AER,Balaji2023NeuSB,Kudithipudi2025Scale}.
Mixed-signal implementations exhibit mismatch in firing thresholds, biases,
weights, and time constants. Their parameters may also vary with temperature,
time, and circuit aging~\cite{Buechel2021Robust,Pehle2022BrainScaleS,
Song2021Reliability}. Local device calibration compensates for mismatch and
drift in the mapping from a logical reference to tile-specific circuit settings. However, it does not ensure that different tiles hold the same logical reference. This gap creates a system-level inconsistency risk. We therefore consider systems in which tiles maintain replicas of a shared slow
reference, such as a threshold target, gain target, or homeostatic setpoint. If those replicas diverge, tiles implement mutually incompatible parameter regimes, causing the same logical input to be processed under different logical settings across the system even when every tile is locally
calibrated.

We study a low-rate control-plane maintenance primitive, separate from
application-level inference, that reduces disagreement among replicas of one
slow scalar. For this approximate-agreement task, robust validity keeps each
correct update within the current correct-state interval up to bounded decoding
error, while convergence drives the range among correct replicas toward a
timing-limited neighborhood. These guarantees concern mutual consistency rather
than absolute anchoring. With persistent timing error, all correct replicas may
drift together away from the original input range.

During each maintenance epoch, every correct tile encodes its current scalar
replica in one sender-labeled spike time. If reliable scalar messages are
available, classical digital approximate agreement is simpler. We instead address the
constrained setting in which spike timing is the only real-valued communicated
observation. Sender labels and epoch identifiers remain required metadata.
Section~\ref{sec:task-level} later examines whether improved reference alignment
is associated with greater prediction consistency across tiles.

Reducing replica disagreement also has a task-level interpretation. Let
\(h_i(x)\) denote the scalar response of correct tile \(i\) when its local
reference equals \(x\). Assume that every \(h_i\) is \(K\)-Lipschitz on
\([0,1]\) and that \(|h_i(x)-h_j(x)|\le\varepsilon_h\) for all correct
\(i,j\) and all \(x\in[0,1]\). Then, for any correct tiles \(i\) and \(j\),
\[
\begin{aligned}
  |h_i(x_i)-h_j(x_j)|
  &\le
  |h_i(x_i)-h_i(x_j)|
  +
  |h_i(x_j)-h_j(x_j)| \\
  &\le
  K|x_i-x_j|+\varepsilon_h
  \le KR+\varepsilon_h,
\end{aligned}
\]
where \(R\) is the range of the correct reference states. Thus, reducing \(R\)
controls only the contribution caused by reference divergence.
\(\varepsilon_h\) captures residual same-reference mismatch after calibration.
This bound applies to continuous responses. Discrete class decisions need not
be Lipschitz, so the classifier study later reports an empirical association
rather than a consequence of this bound.
\begin{example}[Replicated threshold reference]
\label{ex:threshold}
Consider seven tiles that hold a normalized threshold reference \(x_i\in[0,1]\). At most two tiles can be faulty. Once per maintenance epoch, every correct tile broadcasts one labeled spike whose time encodes its current reference. The goal is for the five or more correct tiles to approach a common reference without allowing two faulty labels to pull a correct update outside the current correct range, apart from the unavoidable decoding error. The integer threshold \(n\ge 3f+1\) is met exactly in this example.
\end{example}

We use Byzantine behavior as a conservative model of what a faulty sender label can make a receiver observe. A faulty label may remain silent, emit many events, or present different timings to different receivers. This does not claim that physical faults are strategic. It ensures that the proof covers crash, stuck-at, flooding, duplicate-emission, and timestamp-corruption behaviors whenever they are confined to at most \(f\) labels. Timing perturbations on correct labels are bounded separately by \(\eta\). The classical deterministic resilience threshold under this fault model is \(n\ge3f+1\). The trimmed-mean rule described next removes the \(f\) smallest
and \(f\) largest sender-indexed values. Narrower fault models may require less
redundancy.

Classical synchronous Byzantine approximate agreement~\cite{Dolev1986ApproxAgree} provides the aggregation objective, but not the spike-time interface. Dolev et al. remove extreme values and average a selected subsequence of the retained values. The later Mean-Subsequence-Reduced (MSR) family includes the all-retained uniform-mean rule used here~\cite{Kieckhafer1994MixedMode}. Each receiver sorts one value attributed to every sender, removes the \(f\) smallest and \(f\) largest values, and averages the remaining \(n-2f\). However, a raw spike-event stream does not by itself provide the sender-indexed scalar input required by MSR. Correct observations are perturbed by delivery delay, timestamp jitter, quantization, and residual clock skew. A faulty sender can remain silent, emit many events, or send receiver-dependent event times. The multiplicity problem already breaks validity in the smallest nontrivial instance.

\begin{observation}[Raw events are not sender values]
\label{obs:raw-events}
Let \(n=4\), \(f=1\), and let all three correct values be zero. Suppose the faulty label emits \(K\ge 2\) spikes that all decode to one. A raw-event implementation forms the multiset with three zeros and \(K\) ones. After removing one low and one high event, its mean is
\[
  \frac{K-1}{K+1}>0.
\]
The update leaves the singleton correct input range even though only one sender is faulty. If the receiver first keeps at most one event from each sender label, the effective multiset is \(\{0,0,0,1\}\). MSR then returns zero.
\end{observation}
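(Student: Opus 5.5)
The argument is a direct computation on the two multisets described in \cref{obs:raw-events}, carried out in three steps.

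\emph{Raw-event multiset.} For \(K\ge2\) the raw multiset contains \(3+K\) elements: three zeros from the correct senders and \(K\) ones from the faulty label. Sorting places the zeros first. Removing the \(f=1\) smallest element discards one zero, and removing the \(f=1\) largest discards one one. The \(K+1\) retained elements are therefore two zeros and \(K-1\) ones. Their uniform mean is
\[
  \frac{K-1}{K+1}.
\]
The hypothesis \(K\ge2\) does two jobs. It gives \(K-1\ge1\), so the numerator is positive. It also ensures that the largest element really is a one, so the trimming step removes what we claimed.

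\emph{Validity failure.} The correct input range is the singleton \(\{0\}\), and \(\frac{K-1}{K+1}>0\), so the update lies outside the correct range. It is worth noting that the violation grows with the flood size: as \(K\to\infty\), the value \(\frac{K-1}{K+1}\) tends to \(1\). This is the multiplicity attack, and it needs only one faulty label. No inequality of the form \(n\ge 3f+1\) protects against it, because trimming is applied per event rather than per sender.

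\emph{Per-label admission.} Keeping at most one event per sender label, for example the first, yields exactly one value per sender. The multiset is then \(\{0,0,0,1\}\), whatever value the faulty label's first admitted spike decodes to. MSR removes one \(0\) and the \(1\), leaving \(\{0,0\}\), whose mean is \(0\).

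None of these steps is hard. The only point requiring care is the bookkeeping in the first step: one must count the retained elements as \(K+1=(3+K)-2\) and check that exactly one one is trimmed, which uses \(K\ge2\). The same computation with \(K=1\) gives mean \(0\), consistent with the statement's hypothesis.
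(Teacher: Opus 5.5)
Your computation is correct and is the same direct argument the paper gives: trimming one zero and one one from the raw multiset leaves two zeros and \(K-1\) ones, whereas per-label admission yields \(\{0,0,0,1\}\), from which MSR retains \(\{0,0\}\) and returns zero. Two small slips do not affect the result. First, \(K\ge1\) already makes the largest raw element a one, so \(K\ge2\) is needed only for positivity of \((K-1)/(K+1)\). Second, the phrase ``whatever value the faulty label's first admitted spike decodes to'' conflicts with the multiset \(\{0,0,0,1\}\): in general the per-label multiset is \(\{0,0,0,v\}\) for the admitted decoded value \(v\), which equals one here, and MSR maps it to zero for every \(v\in[0,1]\).
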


This example shows why a refractory or de-duplication rule must be indexed by sender. A global refractory period would suppress legitimate events from other senders. Trimming raw events does not repair the problem because the fault budget counts faulty labels, not emitted events.

We introduce the Spike-time Interface for Faults, or \SIF, as a receiver-side adapter between the event channel and MSR. It requires paced epochs, sender labels that cannot be forged at the immediate receiver, bounded timing uncertainty, and per-label \FirstSpike admission. The receiver accepts only the earliest event from each label in an epoch and uses a local sentinel for silence. The result is exactly one bounded value per sender label, receiver, and epoch. The classical MSR trimmed mean is then applied to the \SIF output.

The timing part requires more care than simply dividing an arrival timestamp by the epoch length. A correct receiver subtracts a nominal link delay and obtains a normalized observation of the form
\[
  s=Lx+z,
  \qquad L=T-2g,
  \qquad |z|\le \omega.
\]
Here \(x\in[0,1]\) is the encoded scalar, \(z\) is the residual timing error, and \(L\) is the usable encoding length. The epoch duration is \(T\), \(g\) is the guard at each endpoint, and \(\omega\) bounds \(|z|\). We use the midpoint of all values that could have produced \(s\). Its exact deterministic minimax error is
\[
  \rho=\min\!\left\{\frac12,\frac{\omega}{L}\right\}.
\]
We also prove a matching lower bound for deterministic decoders of this single affine observation. The result is not a lower bound against alternative encoders, multi-epoch coding, or multiple spikes.

Figure~\ref{fig:overview} summarizes the resulting pipeline. The interface first turns a noisy and potentially flooded event stream into a sender-indexed bounded-error multiset. \SpikeTrim then applies MSR and a convex state update.

\begin{figure}[t]
  \centering
  \includegraphics[width=\linewidth]{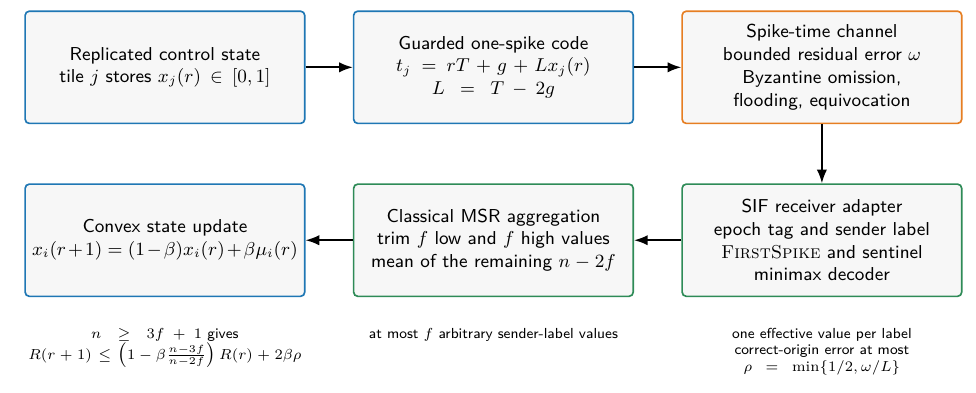}
  \caption{Receiver-centric pipeline. One guarded spike time carries the scalar. \SIF supplies paced epochs, immediate-receiver sender attribution, per-label \FirstSpike admission, a sentinel for silence, and minimax decoding. It produces exactly one bounded value per sender label, which is the input required by the classical MSR rule.}
  \label{fig:overview}
\end{figure}

\bigskip
\noindent
The paper makes four main contributions.

\begin{enumerate}
\item \textbf{A spike-time feasible-set bound.} We derive the feasible interval induced by guarded one-spike timing. Specializing classical Chebyshev-center estimation to this interval gives the exact radius \(\rho=\min\{1/2,\omega/L\}\). A channel-specific indistinguishability construction shows that this value is exact for the affine timing observation under the stated uncertainty class.

\item \textbf{An end-to-end design criterion.} We propagate the decoder radius through the all-retained MSR update. For the direct update with \(\beta=1\), we recover its classical noiseless factor \(f/(n-2f)\)~\cite{Kieckhafer1994MixedMode}. We then derive the bounded-error recurrence with spike-interface-specific constants, the resulting asymptotic range bound, and a closed-form timing-budget criterion.

\item \textbf{A sender-indexed spike-time interface.} We define \SIF as the adapter that turns a noisy, silent, flooded, or equivocating event stream into exactly one bounded value per sender label, receiver, and epoch. The transfer theorem makes the composition explicit. The raw-event counterexample shows why admission must be indexed by sender.

\item \textbf{Model-based evidence.} Simulations test contraction, timing uncertainty, the \(3f+1\) boundary, sender flooding, recovery, and finite-budget prediction consistency. The evaluation is a mechanism study rather than hardware validation or a general neural benchmark.
\end{enumerate}

\noindent
The scalar minimax rule and the reduced-sequence mean have classical antecedents. Our contribution is the spike-time feasible set, the sender-indexed event interface, and the explicit end-to-end constants that link timing uncertainty to Byzantine range and timing-budget guarantees.

The practical promise of the interface is quantitative. A designer can obtain and validate a deterministic upper bound on residual timing uncertainty, choose an epoch budget, and test whether the proved \SpikeTrim bound meets a target. Passing the test establishes that the proved model-relative bound meets the target under the stated timing, attribution, visibility, and fault assumptions. Failing it means that this analysis cannot establish the target. It does not prove that every one-spike protocol must fail. Tighter per-link nominal-delay calibration, additional correct-node redundancy,
a longer timing window, multiple events, or a digital payload can tighten the
guarantee.

\medskip
\noindent
\emph{Paper organization.} Section~\ref{sec:related} positions the work. Section~\ref{sec:model} defines \SIF and the agreement task. Section~\ref{sec:decoder} derives the one-spike decoder. Section~\ref{sec:protocol} presents \SpikeTrim. Section~\ref{sec:analysis} proves correctness and convergence. Section~\ref{sec:evaluation} reports the experiments. Section~\ref{sec:discussion} discusses implementation boundaries and limitations.

\section{Related work}
\label{sec:related}

\subsection{Byzantine approximate agreement}

Scalar approximate agreement asks correct processes to remain valid with respect to correct values while reducing their disagreement. In synchronous complete networks with receiver-dependent Byzantine messages, deterministic approximate agreement is possible exactly when \(n\ge 3f+1\)~\cite{Dolev1986ApproxAgree}. MSR trimming is a classical robust-aggregation mechanism for this setting. Later work extends iterative Byzantine consensus to directed and sparse graphs through graph robustness and related conditions~\cite{Vaidya2012Iterative,LeBlanc2013Robust}.

Persistent bounded measurement errors are known to replace exact consensus by a nonzero worst-case disagreement bound in otherwise fault-free linear consensus~\cite{Garulli2011BoundedErrors}. Here the perturbation radius is derived from spike timing and combined with receiver-dependent Byzantine MSR.

Recent systems use approximate agreement as a practical primitive for target localization and distributed oracles~\cite{Bandarupalli2024SensorBFT,Bandarupalli2024Delphi}. Work on asynchronous approximate agreement studies communication cost and progress when digital messages arrive without paced rounds~\cite{MizrahiErbesWattenhofer2026AA}. These settings assume explicit digital values or symbols. Our problem is different. A receiver observes one noisy time coordinate, and a faulty sender can create an unbounded raw event multiplicity unless the interface limits admission by label.

Self-stabilization concerns recovery from arbitrary transient state corruption without an external reset~\cite{Dijkstra1974SelfStab}. Byzantine clock synchronization shows that paced timing services can themselves be constructed under explicit network and clock assumptions~\cite{LenzenRybicki2019ClockSync}. Neuromorphic architectures also treat synchronization across cores as an explicit service and may replace global barriers with local synchronization~\cite{Li2025NeuroScale}. We do not construct such a service. We use a bounded-skew epoch service as a lower-layer obligation and prove conditional recovery of the agreement layer once the service again satisfies its stated assumptions.

\subsection{Spiking computation and temporal codes}

Temporal coding represents information through spike times. Time-to-first-spike models and learning rules study how neural systems encode or infer values from those times~\cite{Bohte2002SpikeProp,Bonilla2022TTFS}. Distributed-computing models of spiking networks study circuit size, memory, and communication complexity after a spike semantics has been fixed~\cite{LynchMuscoParter2017NeuroRAM,HitronMuscoParter2020SNNStreaming,HitronParterPerri2020AsyncNeural}.

Agreement has also been implemented as a computational task inside a spiking
neural network (SNN)~\cite{kunev2022agreement}. That problem concerns agreement computed by a neural network rather than a Byzantine communication interface for replicated real-valued state. Resilient pulse-coupled oscillator protocols adapt MSR-style filtering to malicious pulse timing. They include detection of multiple pulses and a pure-pulse protocol for phase and frequency synchronization~\cite{iori2024resilient,YanIshii2025SecurePCO}. These are the closest prior pulse interfaces. Their states are oscillator phase and frequency, and the pulses are part of the oscillator dynamics. \SIF instead encodes an arbitrary bounded scalar in one sender-labeled noisy time coordinate and derives the decoder error presented to MSR.

Our question precedes the neural computation performed above the channel. We ask which event-interface properties are needed for a receiver to obtain one robust scalar observation per sender. The key obstacles are sender multiplicity, sender attribution, epoch comparability, silence, and timing uncertainty. This focus is complementary to temporal-code accuracy and to the computational complexity of SNNs.

\subsection{Neuromorphic systems and reliability}

Large neuromorphic processors distribute computation across cores and event routers~\cite{Davies2018Loihi,Merolla2014TrueNorth,Modha2023NorthPole,Frenkel2023Design,Kudithipudi2025Scale}. Address-event communication provides a natural place to carry a source label~\cite{Boahen2000AER,Purohit2022AER}. Calibration and robust training address device-specific mismatch, while reliability studies address memories, neuron faults, aging, routing, permanent faults, and dynamic fault management~\cite{Buechel2021Robust,Pehle2022BrainScaleS,Song2021Reliability,Putra2021Respawn,Putra2022SoftSNN,Putra2023RescueSNN,Spyrou2021NeuronFault,Balaji2023NeuSB}.

Those mechanisms protect components or communication resources but do not reconcile replicas of a shared control reference under Byzantine faults. \SIF instead assumes that lower layers supply bounded timing and source attribution, then adds sender-indexed endpoint admission and conservatively treats each compromised or malfunctioning sender label as Byzantine.

\section{The spike-time interface and agreement task}
\label{sec:model}

\subsection{Nodes, faults, and logical visibility}
Let \(\Nodes=\{1,\ldots,n\}\) be a set of sender-labeled nodes. At most \(f\) nodes are Byzantine. For the convergence and recovery analysis, the Byzantine set \(\Byz\) is fixed, \(b=|\Byz|\le f\), and the correct set is \(\Correct=\Nodes\setminus\Byz\). The logical communication graph is complete. Every correct receiver has an input position for every sender label in each epoch.

A Byzantine node can omit events, emit arbitrary event patterns, and choose different patterns for different receivers. It cannot forge the label of a correct node. This is immediate-receiver source attribution, not a transferable digital signature. A compatible lower layer must bind each admitted label to its physical or logical sender, for example through a trusted router port, verified address metadata, or an endpoint tag. Sybil attacks~\cite{douceur2002sybil, bakar2023review} and sender impersonation are outside the model.

Complete logical visibility does not require a physical all-to-all wire. A
routing layer can implement multicast or an overlay. If some correct sender is
not visible to some correct receiver, that missing visibility must either be
repaired below \SIF or handled by a graph-resilient agreement model. The latter
extension is outside the scope of this paper.

\subsection{Paced epochs, labels, and \FirstSpike admission}

Time is divided into logical epochs of duration \(T\). A lower layer supplies epoch identifiers and epoch boundaries whose remaining skew is included in the timestamp-error bound below. Correct senders schedule their single control event from the local state available at the epoch boundary, and receivers update after the epoch admission window closes. The agreement algorithm is event-driven within an epoch, but the model is not fully asynchronous. A single time coordinate has a common meaning only after the endpoints share a bounded-skew epoch reference.

Control events are distinguished from application spikes by a reserved event type or logical channel. An exclusive maintenance window is an equivalent implementation.

At receiver \(i\), each pair consisting of epoch \(r\) and sender label \(j\) has one admission record. The receiver accepts the earliest event carrying that pair and ignores every later event with the same pair. We call this rule \FirstSpike. If no event from label \(j\) is accepted, the receiver inserts a fixed sentinel \(\sigma\in[0,1]\). Since correct events are guaranteed to arrive in their intended epoch, silence can occur only at a Byzantine label under the stated model.

\subsection{Timing model and guards}
\label{sec:timing-model}

A correct node stores a scalar \(x_i(r)\in[0,1]\) at the start of epoch \(r\). Let \(g\) be the guard reserved at both epoch boundaries and let $L=T-2g>0$ be the usable encoding length. Correct node \(j\) emits its logical control spike at
\begin{equation}
  \tau_j(r)=rT+g+Lx_j(r).
  \label{eq:encoding-time}
\end{equation}

A correct receiver \(i\) records
\begin{equation}
  \widehat t_{ij}(r)=\tau_j(r)+d_{ij}(r)+e_{ij}(r).
  \label{eq:observation}
\end{equation}
The delivery delay satisfies \(d_{ij}(r)\in[0,D]\). This bound is required to hold despite faulty traffic. It therefore presupposes lower-layer traffic isolation, rate control, or reserved capacity. \FirstSpike limits endpoint multiplicity but does not protect the fabric before admission. The timestamp term satisfies \(|e_{ij}(r)|\le\eta\) and includes measurement jitter, timestamp quantization, and residual epoch skew. We require
\begin{equation}
  0<g<T/2,
  \qquad D+\eta<g.
  \label{eq:guard}
\end{equation}
Then every correct observation remains strictly inside its intended epoch. The guard condition concerns the full physical delay. It is distinct from the smaller residual uncertainty used for value decoding.

Receiver \(i\) knows a nominal delay \(\bar d_{ij}\) and a residual bound
\begin{equation}
  |d_{ij}(r)-\bar d_{ij}|\le\delta_d.
  \label{eq:delay-residual}
\end{equation}
We use one global \(\delta_d\) for readability. Link-specific bounds can be replaced by their maximum. Define
\begin{equation}
  \omega=\delta_d+\eta.
  \label{eq:omega}
\end{equation}
If no link calibration is available beyond \(d_{ij}(r)\in[0,D]\), the receiver chooses \(\bar d_{ij}=D/2\), which gives \(\delta_d=D/2\) and \(\omega=D/2+\eta\).

Table~\ref{tab:interface} maps each classical MSR input requirement to the corresponding spike-time problem and the \SIF mechanism that resolves it.

\begin{table}[ht]
\centering
\caption{The interface between spike-time communication and sender-indexed MSR aggregation.}
\label{tab:interface}
{\small
\begin{tabularx}{\textwidth}{>{\raggedright\arraybackslash}p{0.21\textwidth} >{\raggedright\arraybackslash}X >{\raggedright\arraybackslash}X}
\toprule
MSR input requirement & Spike-time failure & \SIF mechanism \\
\midrule
One value from each sender & A faulty label can emit many events & \FirstSpike admits at most one event per label, receiver, and epoch \\
\addlinespace
Sender attribution & An unlabeled event can be duplicated or impersonated & The lower layer supplies a nonforgeable immediate-receiver label \\
\addlinespace
Comparable epochs & Local event times need not share an origin & A paced epoch service supplies bounded-skew boundaries and epoch identifiers \\
\addlinespace
Bounded correct values & Delay and timestamp error perturb a time code & Guarded encoding, nominal-delay subtraction, and minimax decoding give error at most \(\rho\) \\
\addlinespace
A fixed input size & A faulty sender can remain silent & A missing label is represented by a local sentinel in \([0,1]\) \\
\bottomrule
\end{tabularx}
}
\end{table}

\subsection{Normal starts and agreement-layer recovery}

A normal execution satisfies all preceding \SIF assumptions from epoch \(0\). It begins with a correct input \(u_i\in[0,1]\) at every correct node and sets \(x_i(0)=u_i\). For each epoch define
\begin{equation}
  m(r)=\min_{i\in\Correct}x_i(r),
  \qquad
  M(r)=\max_{i\in\Correct}x_i(r),
  \qquad
  R(r)=M(r)-m(r).
  \label{eq:range}
\end{equation}

\begin{definition}[Robust spike-time approximate agreement]
\label{def:agreement}
A protocol achieves robust approximate agreement with decoding radius \(\rho\) and asymptotic range bound \(\Phi_\rho\) if the following properties hold.

\begin{enumerate}
\item In a normal execution, every correct node starts from its input.

\item Every correct update in a clean epoch satisfies
\begin{equation}
  x_i(r+1)\in [m(r)-\rho,M(r)+\rho]\cap[0,1].
  \label{eq:robust-validity-def}
\end{equation}

\item The correct range satisfies
\begin{equation}
  \limsup_{r\to\infty}R(r)\le\Phi_\rho,
  \qquad \Phi_0=0.
  \label{eq:agreement-bound-def}
\end{equation}
\end{enumerate}
\end{definition}

When \(\rho=0\), Equation~\eqref{eq:robust-validity-def} is the usual validity condition. It keeps every correct state in the convex hull of the original correct inputs for all time. For positive \(\rho\), the condition is one-step input-relatedness. It is strictly weaker than global validity with respect to the original correct-input hull because a new bounded error may enter in every epoch. When \(\rho<1/2\), the condition prevents an immediate jump from an all-zero or all-one correct state to \(1/2\), but it does not prevent slow common-mode drift. The sharper finite-horizon bound for \SpikeTrim and its limitation are stated after Lemma~\ref{lem:validity}.

We also allow transient corruption of the agreement layer. Before some time \(t_0\), correct scalar states, local epoch counters, and admission buffers may be arbitrary. We call an epoch \(r\) that begins at or after \(t_0\) \emph{clean} if, during that epoch, the lower-layer timing, sender-identity, and epoch services satisfy the preceding \SIF assumptions, every per-sender admission record for \(r\) has been reset before admission begins, and no event created before \(t_0\) can be admitted. Let \(r_0\) be the first clean epoch, and assume that every epoch \(r\ge r_0\) is clean. Epoch tags can enforce the last condition. Without tags, it is sufficient to wait until all pre-\(t_0\) events have drained. The delivery-delay bound \(D\) makes this waiting time finite. At the start of \(r_0\), every state is clipped to \([0,1]\).

The recovery guarantee is conditional and layer-specific. It starts from the state vector \(x(r_0)\) at the first clean epoch. It cannot reconstruct pre-corruption inputs that an arbitrary state fault may have erased. The protocol does not self-stabilize the clock service, sender labels, or physical network.

The classical resilience threshold remains necessary even with perfect timing.

\begin{theorem}[Necessary fault threshold]
\label{thm:necessity}
Let \(f\ge1\), \(2\le n\le3f\), and \(0<\eps<1\). No deterministic protocol in the \SIF communication model can simultaneously satisfy ordinary validity with \(\rho=0\) and eventually achieve \(R(r)\le\eps\) against at most \(f\) Byzantine nodes.
\end{theorem}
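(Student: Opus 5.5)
We will prove the statement by the classical partition (indistinguishability) argument of Dolev et al., adapted to the \SIF channel. Under perfect timing, a Byzantine label can make any receiver admit any event time it likes, and hence any decoded value it likes. With \(\omega=0\) and \(\rho=0\), a correct event from \(j\) decodes exactly to \(x_j(r)\). The per-receiver view of a Byzantine label can therefore simulate any correct behaviour toward that receiver, including a different correct state toward different receivers. Since \SIF is no weaker than a synchronous complete network with receiver-dependent Byzantine messages, it suffices to run the standard construction inside this model. We keep \FirstSpike, the sentinel, and paced epochs only as the means by which the adversary realizes the required views.

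\textbf{Partition.} Since \(2\le n\le 3f\), split \(\Nodes\) into three sets \(A,B,C\) with \(|A|,|B|,|C|\le f\), \(A\) and \(C\) nonempty, and \(B\) possibly empty.

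\textbf{Three executions.} In execution \(E_0\), every node starts with input \(0\) and \(C\) is Byzantine. In execution \(E_1\), every node starts with input \(1\) and \(A\) is Byzantine. In the mixed execution \(E\), nodes of \(A\) have input \(0\), nodes of \(C\) have input \(1\), and \(B\) is Byzantine. In \(E\), the set \(B\) sends to \(A\) exactly the spikes it would send in \(E_0\), and sends to \(C\) the spikes it would send in \(E_1\); here receiver-dependent timing is essential. In \(E_0\), the Byzantine \(C\) sends to \(A\) exactly what \(C\) sends in \(E\). In \(E_1\), the Byzantine \(A\) sends to \(C\) what \(A\) sends in \(E\).

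\textbf{Induction over epochs.} We show by induction that \(A\)'s admitted sender-indexed record in every epoch is identical in \(E\) and \(E_0\), and that \(C\)'s record is identical in \(E\) and \(E_1\). The protocol is deterministic and all emitted times are fixed functions of local state, so the states coincide. Timing is perfect, so there are no nondeterministic delays to reconcile. To remove any ambiguity, we may set \(D=\eta=0\) and \(\delta_d=0\) in the parameter choice.

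\textbf{Conclusion.} Validity with \(\rho=0\) in \(E_0\) forces every correct state to remain in the hull \(\{0\}\), so \(A\)'s states stay at \(0\) forever. Likewise, \(C\)'s states stay at \(1\) in \(E_1\). By indistinguishability, in \(E\) the correct nodes \(A\cup C\) hold \(0\) and \(1\) for all \(r\). Hence \(R(r)=1>\eps\), which contradicts eventual \(\eps\)-agreement.

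\textbf{Main obstacle.} The delicate point is the circular dependency of the simulated messages. What Byzantine \(C\) sends in \(E_0\) depends on \(C\)'s state in \(E\), which depends on \(B\)'s messages simulating \(E_1\), and so on. We resolve this by defining all three executions simultaneously, epoch by epoch. Within each epoch, each Byzantine set's messages depend only on states at that epoch's start, because correct senders schedule from the state at the epoch boundary. The whole construction is therefore well-founded by induction on \(r\).

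A secondary point is the case \(B=\emptyset\), where \(n\le 2f\). This case is handled by the same construction, since there are then simply no \(B\)-messages to simulate.
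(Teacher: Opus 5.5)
Your proposal is correct and follows essentially the same route as the paper. It uses the same partition into $A$, $B$, $C$, the same three executions with $B$ replaying its $E_0$ and $E_1$ behaviour toward $A$ and $C$, the same epoch-by-epoch well-foundedness argument based on scheduling from epoch-boundary state, and the same validity-forced contradiction $R(r)=1$. One small detail to fill in: you should fix what the Byzantine set sends to the correct set $B$ in $E_0$ and $E_1$, since $B$'s behaviour there must be fully determined before $B$ can replay it in $E$. The paper has $C$ send to all of $A\cup B$ what it sends in the mixed execution, and symmetrically for $A$, but any fixed legal choice works.
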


\begin{proof}
The proof is the standard three-set indistinguishability construction adapted to sender-labeled spike observations.
Because \(n\le3f\) and \(n\ge2\), partition \(\Nodes\) into disjoint sets \(A\), \(B\), and \(C\) such that \(A\ne\varnothing\), \(C\ne\varnothing\), and each set has size at most \(f\). The middle set can be empty. Assume for contradiction that a deterministic protocol satisfies ordinary validity and eventually reaches range at most \(\eps<1\).

Consider three executions with perfect timing and no transient corruption.

In execution \(E^0\), nodes in \(A\cup B\) are correct and start at zero. Nodes in \(C\) are Byzantine.

In execution \(E^1\), nodes in \(B\cup C\) are correct and start at one. Nodes in \(A\) are Byzantine.

In execution \(E^*\), nodes in \(A\cup C\) are correct, nodes in \(B\) are Byzantine, the nodes in \(A\) start at zero, and the nodes in \(C\) start at one.

We construct the Byzantine behavior inductively over epochs. Assume all local histories through the start of epoch \(r\) have been fixed. By the epoch-based \SIF{} semantics, determinism fixes the event scheduled by every correct node in that epoch.

In \(E^*\), every Byzantine node in \(B\) sends to a receiver in \(A\) exactly the sender-labeled observations that the corresponding correct node in \(B\) sends in \(E^0\). It sends to a receiver in \(C\) the observations that the corresponding correct node in \(B\) sends in \(E^1\).

In \(E^0\), every Byzantine node in \(C\) sends to receivers in \(A\cup B\) the observations that the corresponding correct node in \(C\) sends in \(E^*\).

In \(E^1\), every Byzantine node in \(A\) sends to receivers in \(B\cup C\) the observations that the corresponding correct node in \(A\) sends in \(E^*\).

This behavior is legal. Byzantine nodes use only their own labels, but they may choose arbitrary receiver-dependent event times or silence. By induction, every node in \(A\) has the same local history in \(E^0\) and \(E^*\). Its initial state is zero in both executions. Its observations from \(B\) and \(C\) are identical by construction. The symmetric argument shows that every node in \(C\) has the same local history in \(E^1\) and \(E^*\).

In \(E^0\), all correct initial values are zero. Ordinary validity forces every correct state to remain zero. In \(E^1\), all correct initial values are one, so every correct state remains one. Indistinguishability and determinism therefore force every correct node in \(A\) to remain zero in \(E^*\), and every correct node in \(C\) to remain one.

Both \(A\) and \(C\) are nonempty and correct in \(E^*\). Hence \(R(r)=1\) in every epoch, contradicting eventual \(\eps\)-agreement for \(\eps<1\).
\end{proof}

Theorem~\ref{thm:necessity} shows that pacing and sender attribution do not remove the \(3f+1\) threshold because Byzantine senders can still equivocate across receivers.

\section{One-spike feasible-set decoding}
\label{sec:decoder}
The timing model of Section~\ref{sec:model} reduces each accepted correct-origin timestamp, after removing the epoch origin, the guard offset, and the nominal link delay, to one noisy observation of a scalar in \([0,1]\). Before this observation can be supplied to an MSR aggregator, two questions must be settled. The first is how it should be decoded. The second is what worst-case precision is fundamentally achievable from one spike in one epoch. We therefore treat decoding as a standalone minimax estimation problem under the residual uncertainty bound \(|z|\le\omega\).

We characterize the complete interval of source values consistent with an observation and use its midpoint as the estimate. We compute its exact deterministic minimax radius \(\rho=\min\{1/2,\omega/L\}\), give a matching channel-specific lower bound, and then specialize the result to the uncalibrated one-sided delay model \(d\in[0,D]\). The resulting radius \(\rho\) is the value-domain timing guarantee passed to the interface transfer theorem and the subsequent validity, convergence, and recovery analysis.

The receiver first converts an accepted timestamp into a normalized observation
\begin{equation}
  s_{ij}(r)=\widehat t_{ij}(r)-rT-g-\bar d_{ij}.
  \label{eq:normalized-observation}
\end{equation}
For a correct sender, Equations~\eqref{eq:encoding-time}--\eqref{eq:omega} give
\begin{equation}
  s_{ij}(r)=Lx_j(r)+z_{ij}(r),
  \qquad |z_{ij}(r)|\le\omega.
  \label{eq:centered-channel}
\end{equation}

For a real number \(v\), define $\clip(v,0,1)=\min\{1,\max\{0,v\}\}$. For every feasible observation \(s\in[-\omega,L+\omega]\), the set of values consistent with Equation~\eqref{eq:centered-channel} is
\begin{equation}
  I(s)=\left[
  \clip\!\left(\frac{s-\omega}{L},0,1\right),
  \clip\!\left(\frac{s+\omega}{L},0,1\right)
  \right].
  \label{eq:feasible-interval}
\end{equation}
The decoder returns the midpoint of this interval. We use the same clipped formula for arbitrary accepted timestamps.
\begin{equation}
  \Dec(s)=\frac12\left[
  \clip\!\left(\frac{s-\omega}{L},0,1\right)
  +
  \clip\!\left(\frac{s+\omega}{L},0,1\right)
  \right].
  \label{eq:decoder}
\end{equation}
Thus a Byzantine timestamp is always mapped to a value in \([0,1]\).

Under bounded set-membership uncertainty, a Chebyshev center minimizes worst-case error and the optimum equals the Chebyshev radius~\cite{Traub1988IBC,MilaneseVicino1991}. Since \(I(s)\) is an interval, its Chebyshev center is its midpoint. The following theorem computes the resulting radius for the affine spike-time channel and gives a matching channel-specific indistinguishability construction.

\begin{theorem}[Exact minimax decoding radius]
\label{thm:minimax}
For the one-spike channel \(s=Lx+z\), where \(x\in[0,1]\) and \(|z|\le\omega\), the decoder in Equation~\eqref{eq:decoder} satisfies
\begin{equation}
  |\Dec(s)-x|\le\rho,
  \qquad
  \rho=\min\!\left\{\frac12,\frac{\omega}{L}\right\}.
  \label{eq:rho}
\end{equation}
No deterministic decoder from one observation \(s\) to an estimate in \([0,1]\) has a smaller worst-case absolute error over the same uncertainty class.
\end{theorem}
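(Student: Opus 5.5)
The proof has two parts: an upper bound for the midpoint decoder and a matching lower bound via indistinguishability.

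\emph{Upper bound.} Fix \(x\in[0,1]\) and \(|z|\le\omega\), and set \(s=Lx+z\). Write the feasible interval in \eqref{eq:feasible-interval} as \(I(s)=[a,b]\), with \(a=\clip((s-\omega)/L,0,1)\) and \(b=\clip((s+\omega)/L,0,1)\).

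First, \(x\in I(s)\). Indeed \((s-\omega)/L\le x\le (s+\omega)/L\), and clipping to \([0,1]\) is monotone and fixes \(x\). Hence \(|\Dec(s)-x|\le (b-a)/2\).

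Second, \(b-a\le 2\omega/L\), because \(\clip\) is \(1\)-Lipschitz. Also \(b-a\le1\), because both endpoints lie in \([0,1]\). Therefore \(|\Dec(s)-x|\le\min\{1/2,\omega/L\}=\rho\).

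\emph{Lower bound.} Let \(\hat x\) be any deterministic map from observations to \([0,1]\). I will exhibit two sources \(x_0<x_1\) in \([0,1]\) with \(x_1-x_0=2\rho\) that produce a common observation \(s^*\) under admissible noise. Then \(\max\{|\hat x(s^*)-x_0|,|\hat x(s^*)-x_1|\}\ge\rho\), so the worst-case error of \(\hat x\) is at least \(\rho\).

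\emph{Case \(\omega/L\le1/2\).} Take \(x_0=1/2-\omega/L\), \(x_1=1/2+\omega/L\), and \(s^*=L/2\). The noises are \(z_0=\omega\) and \(z_1=-\omega\), both admissible.

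\emph{Case \(\omega/L>1/2\).} Take \(x_0=0\), \(x_1=1\), and \(s^*=L/2\). The required noises are \(\pm L/2\), and \(|L/2|<\omega\), so both are admissible. Here the separation is \(1=2\rho\).

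In both cases the two sources are indistinguishable from \(s^*\), which gives the claimed lower bound and shows the midpoint decoder is optimal.

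The main subtlety I expect is bookkeeping rather than depth. One point is checking that the clipped interval still contains \(x\) at the boundaries. The other is that the lower bound holds even for decoders allowed to output any value in \([0,1]\). This is fine because the two sources are symmetric about \(1/2\).
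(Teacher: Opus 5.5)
Your proof is correct and follows the paper's argument: for the upper bound, $x$ lies in the clipped feasible interval, whose width is at most $\min\{1,2\omega/L\}$; for the lower bound, a two-point indistinguishability pair with separation $2\rho$. The only difference is cosmetic: you center the pair at $1/2$ with common observation $L/2$, which also absorbs the $\omega=0$ case, whereas the paper uses $x_0=0$, $x_1=2\omega/L$ with $s=\omega$ (and note that symmetry about $1/2$ is not actually needed, since the two-point argument bounds any real-valued estimate).
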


\begin{proof}
The true value belongs to \(I(s)\). The interval has width at most
\[
  \min\!\left\{1,\frac{2\omega}{L}\right\}.
\]
Its midpoint is therefore at distance at most half this width from every feasible value. This proves the upper bound.

The lower bound is immediate when \(\omega=0\). Now suppose \(\omega>0\) and \(\omega/L\le1/2\). The two distinct values \(x_0=0\) and \(x_1=2\omega/L\) produce the same observation \(s=\omega\) under errors \(z_0=\omega\) and \(z_1=-\omega\). A deterministic decoder returns one estimate for both executions, so its error is at least \(\omega/L\) in one of them.

Now suppose \(\omega/L\ge1/2\). Values zero and one produce the same observation \(s=L/2\) under errors \(L/2\) and \(-L/2\). The decoder error is at least \(1/2\) for one of these values. Both cases match Equation~\eqref{eq:rho}.
\end{proof}

Figure~\ref{fig:decoder} illustrates the feasible interval and the midpoint rule.

\begin{figure}[t]
  \centering
  \includegraphics[width=\linewidth]{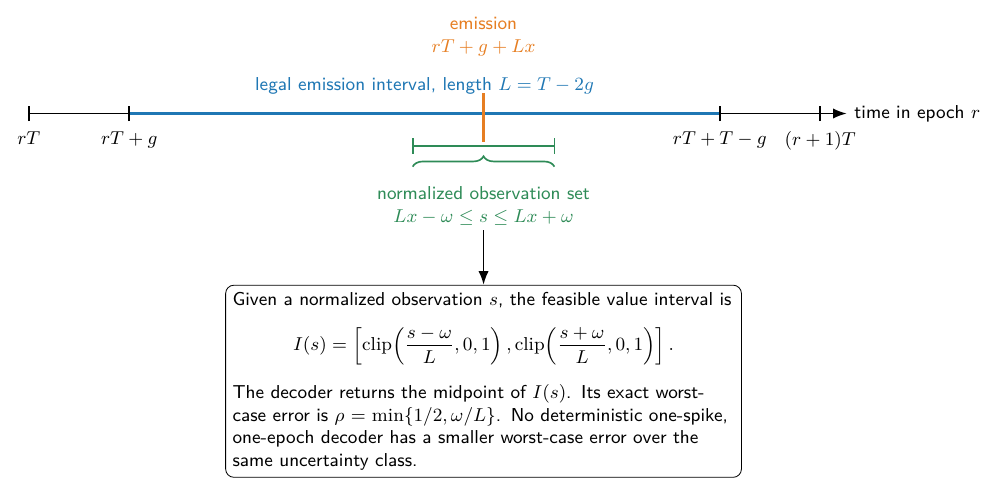}
  \caption{Guarded time encoding and interval decoding. After nominal-delay subtraction, a correct observation lies within \(\omega\) of \(Lx\). The midpoint of the feasible value interval attains the exact minimax radius \(\rho\).}
  \label{fig:decoder}
\end{figure}

\begin{corollary}[Uncalibrated one-sided delay]
\label{cor:uncalibrated}
If the receiver knows only \(d\in[0,D]\), then choosing \(\bar d=D/2\) gives
\begin{equation}
  \rho=\min\!\left\{\frac12,\frac{D/2+\eta}{T-2g}\right\}.
  \label{eq:uncalibrated-rho}
\end{equation}
The corresponding indistinguishability diameter in emission time is \(\min\{T-2g,D+2\eta\}\). The value-domain diameter is \(2\rho\).
\end{corollary}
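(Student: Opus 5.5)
The corollary follows by specializing Theorem~\ref{thm:minimax} to the uncalibrated residual bound, and then rescaling its indistinguishability construction to the emission-time domain. First, I substitute \(\bar d_{ij}=D/2\) into Equation~\eqref{eq:delay-residual}. Since \(d\in[0,D]\), we have \(|d-D/2|\le D/2\), so \(\delta_d=D/2\) is admissible. This is the smallest uniform residual for this choice of \(\bar d\): the endpoints \(d=0\) and \(d=D\) are both allowed, so no smaller \(\delta_d\) can cover them. Then Equation~\eqref{eq:omega} gives \(\omega=D/2+\eta\), and \(L=T-2g\) by definition. Plugging these into Equation~\eqref{eq:rho} yields Equation~\eqref{eq:uncalibrated-rho}. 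The upper bound and the minimax lower bound for the normalized channel are inherited directly from Theorem~\ref{thm:minimax}.

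For the emission-time diameter, I would argue directly in the original timing model, Equations~\eqref{eq:encoding-time}--\eqref{eq:observation}. Two emission times \(\tau,\tau'\) inside the window \([rT+g,rT+g+L]\) can produce the same timestamp \(\widehat t\) exactly when there exist \(d,d'\in[0,D]\) and \(e,e'\in[-\eta,\eta]\) with \(\tau+d+e=\tau'+d'+e'\). The set of achievable values of \(\tau'-\tau=(d+e)-(d'+e')\) is the interval \([-(D+2\eta),D+2\eta]\). Intersecting this with the window length constraint \(|\tau-\tau'|\le L\) gives a maximal indistinguishable separation of \(\min\{L,D+2\eta\}=\min\{T-2g,D+2\eta\}\). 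Attainment comes from the explicit extremal pair: take \(d=D\), \(e=\eta\) for the earlier emission and \(d'=0\), \(e'=-\eta\) for the later one, with \(\tau\) at the window start. When \(L<D+2\eta\), I would instead use the two window endpoints with suitably chosen intermediate errors.

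Dividing by \(L\) converts emission time to value: \(x=(\tau-rT-g)/L\). The value-domain diameter is therefore \(\min\{1,(D+2\eta)/L\}=\min\{1,2\omega/L\}=2\rho\), which matches the width of \(I(s)\) used in the proof of Theorem~\ref{thm:minimax}.

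The main subtlety is reconciling the one-sided physical delay \([0,D]\) with the symmetric residual \(\omega\). I must check that the nominal-delay shift does not change which pairs are indistinguishable, since it is a common translation of the observation. I must also check that the guard condition~\eqref{eq:guard} keeps both extremal observations inside the epoch, so \FirstSpike admits them and neither is treated as a sentinel. This holds because \(D+\eta<g\).
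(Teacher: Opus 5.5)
Your proposal is correct and follows the paper's intended argument. The corollary is Theorem~\ref{thm:minimax} with \(\bar d=D/2\), \(\delta_d=D/2\), \(\omega=D/2+\eta\), and \(L=T-2g\), and your extremal pair is exactly that theorem's indistinguishable pair \(x_0=0\), \(x_1=\min\{1,2\omega/L\}\) written in emission time, giving the diameters \(\min\{T-2g,D+2\eta\}\) and \(2\rho\). Your extra checks, that every residual in \([-\omega,\omega]\) is physically realizable and that the guard keeps both observations admissible, are the right details for the inherited lower bound to apply.
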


When \(\omega\ge L/2\), the one-spike observation cannot distinguish the two endpoint values in the worst case and \(\rho=1/2\). A useful design should therefore operate in the nonsaturated regime \(\omega<L/2\), and its full asymptotic range bound should be checked with Corollary~\ref{cor:design-budget}.

The theorem isolates the exact scope of the timing lower bound. Multiple spikes can spend a larger communication budget. Multiple epochs can code information jointly. Stochastic assumptions can also improve average error. None of these possibilities contradicts the one-spike, one-epoch minimax result.

\section{The \SpikeTrim protocol}
\label{sec:protocol}
At every clean epoch, each correct node emits one encoded spike and constructs one local value for each sender label. It sets its own entry directly to its local state rather than transmitting that value through the physical channel. For any missing remote label, it inserts the sentinel \(\sigma\). The node then applies MSR and uses a relaxation parameter \(\beta\in(0,1]\) in its state update.

\begin{algorithm}[h]
\caption{\SpikeTrim at correct node \(i\) in epoch \(r\)}
\label{alg:spiketrim}
\begin{algorithmic}[1]
\State \(x_i(r)\gets\clip(x_i(r),0,1)\)
\State Reset the per-sender admission records for epoch \(r\)
\State Emit one event at time \(rT+g+Lx_i(r)\)
\State Set \(y_{ii}(r)\gets x_i(r)\)
\State Wait until the epoch admission window closes
\For{each sender label \(j\ne i\)}
  \If{an event with label \(j\) is accepted by \FirstSpike}
    \State Compute \(s_{ij}(r)\) by Equation~\eqref{eq:normalized-observation}
    \State Set \(y_{ij}(r)\gets\Dec(s_{ij}(r))\)
  \Else
    \State Set \(y_{ij}(r)\gets\sigma\)
  \EndIf
\EndFor
\State Sort \(Y_i(r)=\{y_{i1}(r),\ldots,y_{in}(r)\}\)
\State Remove its \(f\) smallest and \(f\) largest entries
\State Let \(\mu_i(r)\) be the mean of the remaining \(n-2f\) entries
\State \(x_i(r+1)\gets\clip((1-\beta)x_i(r)+\beta\mu_i(r),0,1)\)
\end{algorithmic}
\end{algorithm}

The sentinel can be any fixed value in \([0,1]\). Correctness does not depend on choosing a particular endpoint because every missing label is already counted among the at most \(f\) arbitrary sender-label entries. The use of a local self value is the standard send-to-self convention and removes unnecessary loopback timing error.

The following theorem is the interface result. It is independent of the later contraction proof.

\begin{theorem}[\SIF to MSR transfer]
\label{thm:interface}
In every clean epoch and at every correct receiver \(i\), the multiset \(Y_i(r)\) contains exactly one value for each sender label. For every correct label \(j\),
\begin{equation}
  |y_{ij}(r)-x_j(r)|\le\rho.
  \label{eq:correct-origin-bound}
\end{equation}
The remaining \(b\le f\) entries correspond to Byzantine labels. They are arbitrary values in \([0,1]\) and may differ across correct receivers.
\end{theorem}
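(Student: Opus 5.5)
The argument follows Algorithm~\ref{alg:spiketrim} line by line in a fixed clean epoch \(r\) at a correct receiver \(i\). It has three parts: cardinality of \(Y_i(r)\), the error bound for correct labels, and the range of the Byzantine entries.

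\emph{Cardinality.} The own entry \(y_{ii}(r)\) is set once, directly from the local state. For each \(j\ne i\), the loop executes exactly one of the two branches. Since the admission record for \((r,j)\) was reset before admission began, \FirstSpike accepts at most one event carrying that pair. This holds even if label \(j\) floods. The clean-epoch definition also excludes any event created before \(t_0\) or tagged with another epoch. So each label contributes exactly one entry: either \(\Dec(s_{ij}(r))\) or \(\sigma\). Hence \(|Y_i(r)|=n\), with one entry per label.

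\emph{Correct labels.} For \(j=i\), the error is \(0\le\rho\). For a correct \(j\ne i\), I first argue that some event from \(j\) is accepted, so the sentinel branch is never taken. By \eqref{eq:encoding-time} and \eqref{eq:guard}, the arrival time \(\widehat t_{ij}(r)\) lies in \([rT+g-\eta,\ rT+T-g+D+\eta]\), which is strictly inside \((rT,(r+1)T)\). Correct \(j\) emits exactly one event, and only \(j\) can produce label \(j\) because labels are nonforgeable. Therefore the accepted event is exactly \(j\)'s own event. Once it is accepted, \eqref{eq:normalized-observation} yields \eqref{eq:centered-channel} with \(|z_{ij}|\le\omega\), using \eqref{eq:delay-residual} and \eqref{eq:omega}. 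Theorem~\ref{thm:minimax} then gives \(|y_{ij}(r)-x_j(r)|\le\rho\). This application needs \(x_j(r)\in[0,1]\), which holds because line~1 clips the state before emission.

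\emph{Byzantine labels.} For each of the \(b\le f\) remaining labels, the entry is either \(\sigma\in[0,1]\) or \(\Dec\) of an arbitrary timestamp. By \eqref{eq:decoder}, \(\Dec\) averages two clipped numbers and therefore lies in \([0,1]\). Nothing ties these entries together across receivers, since Byzantine senders may equivocate. The step that needs the most care is the admission argument for correct labels. It must rule out two things: that a Byzantine event, or a stale event, is admitted under label \(j\) ahead of \(j\)'s true spike, and that \(j\)'s spike falls outside the window. I would handle it explicitly by combining label nonforgeability with the clean-epoch reset and drain conditions and the guard inequality.
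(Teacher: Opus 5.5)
Your proposal is correct and follows the same route as the paper's proof. The own entry is exact, and the guard inequality plus nonforgeable labels fix the admitted event for each correct remote label. Theorem~\ref{thm:minimax} then gives the $\rho$ bound, and \FirstSpike, the sentinel, and the clipped decoder supply exactly one value in $[0,1]$ per Byzantine label. You also make explicit several details the paper leaves implicit: the arrival-window computation, the role of the admission-record reset and the drain or epoch-tag condition, and the need for $x_j(r)\in[0,1]$ via line~1 of Algorithm~\ref{alg:spiketrim}.
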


\begin{proof}
The local entry is exact. For every other correct label, Equation~\eqref{eq:guard} keeps its event in the intended epoch. Identity preservation prevents a Byzantine node from replacing that entry. Theorem~\ref{thm:minimax} gives Equation~\eqref{eq:correct-origin-bound}.

For a Byzantine label, \FirstSpike admits at most one event. If no event is admitted, the sentinel supplies one entry. The clipped decoder and the sentinel both lie in \([0,1]\). There are exactly \(b\) Byzantine labels.
\end{proof}

The theorem exposes the modularity of the design. Above \SIF, the receiver sees
the usual sender-indexed Byzantine fault model, augmented by a bounded error on
correct-origin values. Other robust aggregators that accept one bounded value
per sender under this same fault-and-error model can likewise use \SIF,
although this paper analyzes only the classical trimmed mean.

\begin{lemma}[Robust validity]
\label{lem:validity}
Assume \(n\ge3f+1\). For every clean epoch and every correct node \(i\), \SpikeTrim satisfies the sharper bound
\begin{equation}
  x_i(r+1)\in[m(r)-\beta\rho,M(r)+\beta\rho]\cap[0,1].
  \label{eq:beta-validity}
\end{equation}
Hence it satisfies the robust-validity condition in Definition~\ref{def:agreement}.
\end{lemma}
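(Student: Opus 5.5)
Apply Theorem~\ref{thm:interface}, show that the trimmed mean \(\mu_i(r)\) lies in \([m(r)-\rho,M(r)+\rho]\), and then pass this bound through the convex update and the clipping step.

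\emph{Step 1 (trimmed mean bound).} By Theorem~\ref{thm:interface}, \(Y_i(r)\) has exactly \(n\) entries. At least \(n-b\ge n-f\) of them come from correct labels, and each of those lies in \([m(r)-\rho,M(r)+\rho]\). The self entry is exact, and every other correct entry satisfies \eqref{eq:correct-origin-bound}. At most \(b\le f\) entries are arbitrary values in \([0,1]\).

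We claim every retained entry lies in \([m(r)-\rho,M(r)+\rho]\). Suppose a retained entry \(v\) satisfies \(v>M(r)+\rho\). Since \(f\) larger-or-equal entries were removed above it, at least \(f+1\) entries are \(>M(r)+\rho\). All of them must be Byzantine, which is impossible because \(b\le f\). The symmetric argument applies below \(m(r)-\rho\). For this to be meaningful the retained set must be nonempty, and \(n\ge3f+1\) gives \(n-2f\ge f+1\ge1\).

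Since every retained entry lies in the interval, their mean \(\mu_i(r)\) lies there too.

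\emph{Step 2 (convex update).} Write \(x_i(r)\in[m(r),M(r)]\); this holds because \(i\) is correct and the state was clipped at the start of the epoch. Then
\[
(1-\beta)x_i(r)+\beta\mu_i(r)\in[(1-\beta)m(r)+\beta(m(r)-\rho),\,(1-\beta)M(r)+\beta(M(r)+\rho)],
\]
which is exactly \([m(r)-\beta\rho,M(r)+\beta\rho]\).

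\emph{Step 3 (clipping).} The map \(\clip(\cdot,0,1)\) is monotone and maps into \([0,1]\). Since \(m(r),M(r)\in[0,1]\), clipping a point of \([m(r)-\beta\rho,M(r)+\beta\rho]\) yields a point of the intersection \([m(r)-\beta\rho,M(r)+\beta\rho]\cap[0,1]\). If the point is below \(0\), it becomes \(0\), and \(0\ge m(r)-\beta\rho\) because \(m(r)\ge0\); the upper side is symmetric. This proves \eqref{eq:beta-validity}.

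Since \(\beta\le1\), we have \(\beta\rho\le\rho\), so \eqref{eq:robust-validity-def} follows.

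The only nontrivial step is Step 1, the counting argument. It relies on the fact that trimming \(f\) entries per side removes every value that lies outside the correct-origin envelope \([m(r)-\rho,M(r)+\rho]\). Care is needed with ties and with \(b<f\). The argument I would use is the standard one: the \((f+1)\)-th largest entry is at most the largest correct-origin entry, and the \((f+1)\)-th smallest is at least the smallest. This holds because at most \(f\) entries are non-correct, so among the top \(f+1\) entries at least one is correct.
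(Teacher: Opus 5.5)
Your proof is correct and follows the same route as the paper. You place \(\mu_i(r)\) in \([m(r)-\rho,M(r)+\rho]\) because the at most \(f\) Byzantine-label entries outside that envelope are removed by the trims, then pass the bound through the convex combination with \(x_i(r)\in[m(r),M(r)]\) and through clipping; your order-statistic count (at least one correct-origin entry among the top or bottom \(f+1\)) just spells out the trimming step that the paper states in one line.
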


\begin{proof}
Every correct-origin decoded value belongs to \([m(r)-\rho,M(r)+\rho]\). A value below this interval must come from a Byzantine label. There are at most \(f\) such labels, so all values below the interval are removed by the low trim. The symmetric argument applies above the interval. Therefore
\[
  \mu_i(r)\in[m(r)-\rho,M(r)+\rho].
\]
The unclipped update is a convex combination of \(x_i(r)\in[m(r),M(r)]\) and \(\mu_i(r)\). It lies in the interval in Equation~\eqref{eq:beta-validity}. Clipping to \([0,1]\) cannot move it outside the intersection.
\end{proof}

For a normal start, Lemma~\ref{lem:validity} gives
\begin{equation}
  x_i(r)\in[m(0)-r\beta\rho,M(0)+r\beta\rho]\cap[0,1].
  \label{eq:beta-input-envelope}
\end{equation}
When \(\rho=0\), every correct state remains in the original correct input hull. This excludes rules that ignore the inputs and always return \(1/2\).

\begin{remark}[Cumulative location drift]
\label{rem:cumulative-drift}
For \(\rho>0\), robust validity is only a one-step input-related property. Equation~\eqref{eq:beta-input-envelope} gives the finite-horizon bound
\begin{equation}
  \max_{i\in\Correct}
  \operatorname{dist}\!\left(x_i(r),[m(0),M(0)]\right)
  \le \min\{1,r\beta\rho\}.
  \label{eq:cumulative-location-drift}
\end{equation}
The dependence on \(r\) is not merely a proof artifact. Let \(f\ge1\), \(b=f\), and \(0<\rho<1/2\), and suppose that all correct states equal an interior value \(c_r\). At each receiver, choose residual timing error \(+\omega\) for every remote correct-origin observation. Whenever \(c_r+2\rho\le1\), every such observation decodes to \(c_r+\rho\). Let the local self value remain exact and let every Byzantine entry equal one. The low trim removes the exact self value and \(f-1\) remote values. The high trim removes the \(f\) Byzantine values. All \(n-2f\) retained entries then equal \(c_r+\rho\), so every correct state moves to \(c_{r+1}=c_r+\beta\rho\). Thus admissible same-direction timing errors can translate an already narrow correct-state interval while its range remains small.

The asymptotic range bound controls separation among correct nodes. It does not
control their common bias relative to the initial correct-input hull. For
applications that periodically obtain fresh local reference estimates,
reseeding the tile states before a finite maintenance phase provides one way to
limit common-mode drift. If such a phase uses \(K\) updates, its additional
outward displacement is bounded by \(\min\{1,K\beta\rho\}\). Periodic
re-estimation is an application-level mitigation, not a global-validity
property of \SpikeTrim.
\end{remark}

\section{Convergence and agreement-layer self-stabilization}
\label{sec:analysis}
Section~\ref{sec:protocol} establishes the one-step safety side of \SpikeTrim. After the \SIF adapter, every correct receiver has one bounded value per sender label, and the MSR update satisfies robust validity. The remaining issue is dynamic. Receiver-dependent Byzantine values can cause different correct receivers to retain different subsets of the sender-indexed values. A new decoding error may also enter in every epoch. Robust validity alone therefore guarantees neither contraction of the correct-state range nor recovery after transient corruption.

This section closes that gap by deriving a common interval that contains every correct receiver's trimmed mean. Its width is controlled by the current correct-state range. This interval recovers the classical tight noiseless contraction factor \(f/(n-2f)\), yields the noisy range recurrence for every \(\beta\in(0,1]\), the explicit asymptotic range bound \(\Phi_\rho\), and a finite recovery bound. We then rearrange the bound to obtain a timing-budget condition for a prescribed disagreement target.

The self-stabilization guarantee is intentionally agreement-layer specific. It starts at the first clean post-corruption epoch and assumes that the lower-layer \SIF{} timing, identity, epoch, and admission mechanisms satisfy their stated assumptions. It does not reconstruct inputs erased by the transient fault.

We retain the global epoch index. Let \(r_0\) be the first clean post-corruption epoch. For recovery statements, \(k\ge0\) counts updates in the clean suffix, so the corresponding global epoch is \(r_0+k\). A normal execution has \(r_0=0\).

Let \(p=n-2f\) be the number of retained entries. Sort the true correct values as
\[
  v_1(r)\le v_2(r)\le\cdots\le v_{n-b}(r).
\]
Define
\begin{align}
  \ell(r)
  &=\frac1p\sum_{k=f-b+1}^{f-b+p}v_k(r),
  \label{eq:ell}\\
  u(r)
  &=\frac1p\sum_{k=f+1}^{f+p}v_k(r).
  \label{eq:u}
\end{align}
All indices are valid because \(b\le f\) and \(n\ge3f+1\).

\begin{lemma}[Common interval for all trimmed means]
\label{lem:trim-interval}
For every correct receiver \(i\),
\begin{equation}
  \mu_i(r)\in[\ell(r)-\rho,u(r)+\rho].
  \label{eq:trim-interval}
\end{equation}
Moreover,
\begin{equation}
  u(r)-\ell(r)
  \le \frac{b}{n-2f}R(r)
  \le \frac{f}{n-2f}R(r).
  \label{eq:interval-width}
\end{equation}
\end{lemma}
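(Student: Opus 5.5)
Our plan is to use \cref{thm:interface} to compare the sorted multiset \(Y_i(r)\) with the sorted true correct values. By that theorem, \(Y_i(r)\) consists of \(n-b\) correct-origin entries, each within \(\rho\) of the corresponding \(x_j(r)\), and \(b\) arbitrary Byzantine entries in \([0,1]\). Write \(w_1\le\cdots\le w_{n-b}\) for the sorted correct-origin decoded entries. Since each entry is within \(\rho\) of its true value, a standard matching argument for order statistics gives \(|w_k-v_k(r)|\le\rho\) for each \(k\): the \(k\)-th smallest of a perturbed set moves by at most the perturbation bound.

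\textbf{Membership.} We then locate the retained window among the correct-origin entries. Let \(z_1\le\cdots\le z_n\) be the sorted full multiset, so the retained entries are \(z_{f+1},\dots,z_{f+p}\). Since at most \(b\) entries are Byzantine, the \(k\)-th smallest entry of the full multiset satisfies
\[
  w_{k-b}\le z_k\le w_k
\]
for every \(k\) with valid indices. Indeed, at least \(k-b\) correct entries lie at or below \(z_k\), and removing entries cannot lower the \(k\)-th order statistic of the correct subset. Apply this for \(k=f+1,\dots,f+p\). The indices \(k-b\ge f-b+1\ge1\) and \(k\le f+p=n-f\le n-b\) are valid because \(b\le f\). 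Summing and dividing by \(p\) gives
\[
  \frac1p\sum_{k=f-b+1}^{f-b+p}w_k\le\mu_i(r)\le\frac1p\sum_{k=f+1}^{f+p}w_k .
\]
Replacing each \(w_k\) by \(v_k(r)\pm\rho\) yields \(\ell(r)-\rho\le\mu_i(r)\le u(r)+\rho\). The main obstacle is stating the sandwich \(w_{k-b}\le z_k\le w_k\) cleanly. We will argue it by counting: fewer than \(k-b\) correct entries strictly below \(z_k\) would force more than \(b\) Byzantine entries in the first \(k\) positions, which is impossible. The upper side follows because inserting elements cannot increase an order statistic.

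\textbf{Width.} For \eqref{eq:interval-width}, write
\[
  u(r)-\ell(r)=\frac1p\sum_{k=f-b+1}^{f-b+p}\bigl(v_{k+b}(r)-v_k(r)\bigr).
\]
If \(b=0\) this is zero. Otherwise the sum telescopes blockwise: each difference \(v_{k+b}-v_k\) is a sum of \(b\) consecutive gaps \(v_{t+1}-v_t\), and each gap \(t\) is counted at most \(b\) times as \(k\) ranges over \(p\) consecutive indices. Hence the total is at most \(b\,(v_{n-b}-v_1)=bR(r)\). Dividing by \(p=n-2f\) gives \(u(r)-\ell(r)\le \frac{b}{n-2f}R(r)\), and \(b\le f\) gives the second inequality.
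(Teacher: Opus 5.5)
Your proposal is correct and follows the paper's proof. It uses the same sandwich $w_{f-b+h}\le a_{f+h}\le w_{f+h}$ on the retained order statistics together with the order-statistic perturbation bound $|w_k-v_k(r)|\le\rho$; the only difference is the width step, where you count gaps instead of cancelling the overlapping terms into two $b$-term blocks bounded by $M(r)$ and $m(r)$, which is an equivalent bookkeeping. One small wording fix: when you justify the lower side by counting, count correct entries \emph{at or below} $z_k$, as in your first formulation, not \emph{strictly} below, because with ties the strict version does not force extra Byzantine entries into the first $k$ positions.
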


\begin{proof}
Fix a correct receiver and omit the epoch argument. Sort its correct-origin decoded values as \(w_1\le\cdots\le w_{n-b}\). Sort all \(n\) entries, including the \(b\) Byzantine-label entries, as \(a_1\le\cdots\le a_n\). The \(h\)-th retained value is \(a_{f+h}\), where \(1\le h\le p\).

Fewer than \(f+h\) entries can be strictly smaller than \(w_{f-b+h}\). At most \(f-b+h-1\) are correct-origin entries and at most \(b\) are Byzantine-label entries. Hence
\[
  a_{f+h}\ge w_{f-b+h}.
\]
At least \(f+h\) correct-origin entries are no larger than \(w_{f+h}\), so
\[
  a_{f+h}\le w_{f+h}.
\]
The componentwise decoding bound implies \(v_k-\rho\le w_k\le v_k+\rho\) for every order statistic \(k\). Averaging the preceding inequalities over \(h=1,\ldots,p\) proves Equation~\eqref{eq:trim-interval}.

To bound its width, cancel the overlapping terms in Equations~\eqref{eq:ell} and \eqref{eq:u}. The difference is
\[
  \frac1p\left(
  \sum_{k=n-f-b+1}^{n-f}v_k
  -
  \sum_{k=f-b+1}^{f}v_k
  \right).
\]
Each sum has \(b\) terms. Every term in the first sum is at most \(M(r)\), and every term in the second is at least \(m(r)\). This gives Equation~\eqref{eq:interval-width}.
\end{proof}

\subsection{Tight contraction without timing error}

The all-retained update is a member of the classical MSR family~\cite{Kieckhafer1994MixedMode}. We restate its noiseless factor in the present notation because it supplies the contraction constant for the bounded-error analysis. The construction below also verifies attainability under receiver-dependent Byzantine values.

\begin{theorem}[Tight noiseless contraction]
\label{thm:tight-contraction}
Assume \(\rho=0\), \(\beta=1\), and \(n\ge3f+1\). Then
\begin{equation}
  R(r+1)\le\frac{f}{n-2f}R(r).
  \label{eq:tight-contraction}
\end{equation}
The factor is tight for every \(f\ge1\).
\end{theorem}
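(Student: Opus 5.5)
The upper bound follows directly from Lemma~\ref{lem:trim-interval}. With \(\rho=0\) and \(\beta=1\), every correct update is \(x_i(r+1)=\clip(\mu_i(r),0,1)=\mu_i(r)\). The clip is inactive because \(\mu_i(r)\) is an average of values in \([0,1]\). Equation~\eqref{eq:trim-interval} places every such \(\mu_i(r)\) in the common interval \([\ell(r),u(r)]\). Hence \(R(r+1)\le u(r)-\ell(r)\), and Equation~\eqref{eq:interval-width} gives \(R(r+1)\le \frac{b}{n-2f}R(r)\le\frac{f}{n-2f}R(r)\).

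For tightness, I would fix \(f\ge1\) and \(n\ge3f+1\), take \(b=f\), and build a single epoch in which two correct receivers end exactly \(\frac{f}{n-2f}R(r)\) apart. I would aim to realize the extremes \(\ell(r)\) and \(u(r)\) exactly, using \(R(r)=1\). Choose \(f\) correct nodes at \(0\) and the remaining \(n-2f\ge f+1\) correct nodes at \(1\). Receiver \(i_0\) is a correct node at \(0\), and receiver \(i_1\) is a correct node at \(1\); both exist. The Byzantine labels report \(0\) to \(i_0\) and \(1\) to \(i_1\), which is legal by Theorem~\ref{thm:interface} since their entries are arbitrary and receiver dependent.

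At \(i_0\), the multiset has \(2f\) zeros and \(n-2f\) ones. The low trim removes \(f\) zeros and the high trim removes \(f\) ones. The retained block then contains \(f\) zeros and \(n-3f\) ones, so \(\mu_{i_0}=\frac{n-3f}{n-2f}\). At \(i_1\), there are \(f\) zeros and \(n-f\) ones. Trimming removes all \(f\) zeros and \(f\) ones, leaving only ones, so \(\mu_{i_1}=1\). The difference is \(\frac{f}{n-2f}=\frac{f}{n-2f}R(r)\).

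Because \(R(r+1)\) is at least the gap between these two receivers, the inequality in Equation~\eqref{eq:tight-contraction} holds with equality. The main care point is the order-statistic bookkeeping in this construction: checking that \(n-2f\ge f+1\) guarantees the needed receivers and that the trims land exactly as claimed. This matches the slack analysis in Lemma~\ref{lem:trim-interval}, where equality requires the \(b\) top and bottom correct values to sit at \(M\) and \(m\).
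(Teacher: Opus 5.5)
Your proposal is correct and follows essentially the same argument as the paper. The upper bound comes from Lemma~\ref{lem:trim-interval}. Tightness uses the same two-level construction: \(f\) correct nodes low, \(n-2f\) correct nodes high, and Byzantine labels reporting \(0\) to one receiver and \(1\) to another. The paper uses interior levels \(0<a<c<1\), which attains equality for any range \(c-a\); your choice \(a=0\), \(c=1\) is the special case \(R(r)=1\), and the resulting ties with the Byzantine values are harmless for the trimmed mean.
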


\begin{proof}
Lemma~\ref{lem:trim-interval} places every correct output in one interval of width at most \(fR(r)/(n-2f)\). This proves the upper bound.

For tightness, let the actual number of Byzantine nodes be \(b=f\). Choose two values \(0<a<c<1\). Give \(f\) correct nodes value \(a\) and the remaining \(n-2f\) correct nodes value \(c\). To one correct receiver, every Byzantine sender reports zero. To another, every Byzantine sender reports one. The first receiver retains \(f\) copies of \(a\) and \(n-3f\) copies of \(c\). The second retains only copies of \(c\). Their output difference is
\[
  \frac{f}{n-2f}(c-a),
\]
which attains Equation~\eqref{eq:tight-contraction}.
\end{proof}

If \(f=0\) and \(\beta=1\), all correct receivers compute the same mean in one epoch. For \(f\ge1\), iteration gives
\begin{equation}
  R(r)\le\left(\frac{f}{n-2f}\right)^r R(0).
  \label{eq:noiseless-iteration}
\end{equation}
Thus a normal execution reaches \(R(r)\le\eps\) after at most
\begin{equation}
  \left\lceil
  \frac{\log(R(0)/\eps)}{\log((n-2f)/f)}
  \right\rceil
  \label{eq:noiseless-rounds}
\end{equation}
epochs whenever \(R(0)>\eps\).

\subsection{Bounded timing error and convex updates}

\begin{theorem}[Range recurrence]
\label{thm:range-recurrence}
For \(n\ge3f+1\), \(\beta\in(0,1]\), and every clean epoch,
\begin{equation}
  R(r+1)
  \le
  \left(1-\beta+\beta\frac{b}{n-2f}\right)R(r)
  +2\beta\rho.
  \label{eq:actual-b-recurrence}
\end{equation}
In the worst case \(b=f\), define
\begin{equation}
  q_\beta=1-\beta\frac{n-3f}{n-2f}.
  \label{eq:qbeta}
\end{equation}
Then
\begin{equation}
  R(r+1)\le q_\beta R(r)+2\beta\rho.
  \label{eq:worst-recurrence}
\end{equation}
\end{theorem}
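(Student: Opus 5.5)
The plan is to bound the unclipped updates of any two correct nodes $i,j$ separately in their ``memory'' and ``aggregation'' parts, and then argue that clipping cannot increase the spread. Write $\tilde x_i=(1-\beta)x_i(r)+\beta\mu_i(r)$, so that $x_i(r+1)=\clip(\tilde x_i,0,1)$. Since $\clip(\cdot,0,1)$ is $1$-Lipschitz and monotone, $|x_i(r+1)-x_j(r+1)|\le|\tilde x_i-\tilde x_j|$. It therefore suffices to bound $\max_{i,j\in\Correct}(\tilde x_i-\tilde x_j)$.

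For the bound itself, decompose
\[
  \tilde x_i-\tilde x_j=(1-\beta)\bigl(x_i(r)-x_j(r)\bigr)+\beta\bigl(\mu_i(r)-\mu_j(r)\bigr).
\]
The first term is at most $(1-\beta)R(r)$ because both states lie in $[m(r),M(r)]$. In a clean epoch the state is already in $[0,1]$, so the clip on line~1 of Algorithm~\ref{alg:spiketrim} changes nothing in the normal case; for $r_0$ the range is simply measured after that clip. For the second term, invoke Lemma~\ref{lem:trim-interval}. Every correct receiver's trimmed mean lies in the common interval $[\ell(r)-\rho,u(r)+\rho]$, so
\[
  \mu_i(r)-\mu_j(r)\le u(r)-\ell(r)+2\rho\le \frac{b}{n-2f}R(r)+2\rho .
\]
Multiplying by $\beta$ and adding the first term gives Equation~\eqref{eq:actual-b-recurrence}. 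Taking the maximum over ordered pairs $i,j$ bounds $R(r+1)$.

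For the worst-case form, substitute $b\le f$. A one-line simplification then gives $1-\beta+\beta f/(n-2f)=1-\beta(n-3f)/(n-2f)=q_\beta$, which yields Equation~\eqref{eq:worst-recurrence}.

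The only step requiring care is the assertion that the two terms can be bounded independently. The maximizing pair for the memory term need not coincide with the maximizing pair for the aggregation term. This is harmless, because we bound each term by its own supremum, and that yields an upper bound (possibly loose, but valid). The real content is the receiver-independence of the interval in Lemma~\ref{lem:trim-interval}. That independence holds even though Byzantine entries differ across receivers and each receiver's own entry is exact rather than decoded; the lemma's proof already handles both points through the componentwise bound of Theorem~\ref{thm:interface}, with $\rho\ge0$ covering the exact self entry.
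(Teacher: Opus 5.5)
Your proposal is correct and follows essentially the same route as the paper: split the unclipped update difference into the $(1-\beta)R(r)$ memory term and $\beta$ times the aggregate difference, bound the latter by the receiver-independent interval of Lemma~\ref{lem:trim-interval}, take the maximum over correct pairs, and substitute $b\le f$. The only cosmetic difference is that you handle clipping via its $1$-Lipschitz property, whereas the paper notes that clipping is inactive because $x_i(r)$ and $\mu_i(r)$ both lie in $[0,1]$; either justification suffices.
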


\begin{proof}
The unclipped update at correct node \(i\) is
\[
  z_i(r+1)=(1-\beta)x_i(r)+\beta\mu_i(r).
\]
All states and aggregates lie in \([0,1]\), so clipping is inactive. For two correct nodes \(i\) and \(k\),
\[
  |z_i(r+1)-z_k(r+1)|
  \le(1-\beta)R(r)+\beta|\mu_i(r)-\mu_k(r)|.
\]
Lemma~\ref{lem:trim-interval} bounds the aggregate difference by
\[
  \frac{b}{n-2f}R(r)+2\rho.
\]
Taking the maximum over correct pairs proves Equation~\eqref{eq:actual-b-recurrence}. Substituting \(b\le f\) gives Equations~\eqref{eq:qbeta} and \eqref{eq:worst-recurrence}.
\end{proof}

For any clean base epoch \(r_{\mathrm b}\) and any integer \(k\ge0\), unrolling the recurrence yields
\begin{equation}
  R(r_{\mathrm b}+k)
  \le q_\beta^k R(r_{\mathrm b})
  +\frac{2\rho(n-2f)}{n-3f}\left(1-q_\beta^k\right).
  \label{eq:unrolled}
\end{equation}
The state range is always at most one. Define the worst-case asymptotic range bound
\begin{equation}
  \Phi_\rho
  =\min\!\left\{1,\frac{2\rho(n-2f)}{n-3f}\right\}.
  \label{eq:asymptotic-bound}
\end{equation}

\begin{theorem}[Quantitative agreement-layer self-stabilization]
\label{thm:self-stabilization}
Assume \(n\ge3f+1\). From any first clean post-corruption state vector, \SpikeTrim satisfies robust validity and
\begin{equation}
  \limsup_{k\to\infty}R(r_0+k)\le\Phi_\rho.
  \label{eq:limsup}
\end{equation}
For any \(\eps>0\), define \(k_\eps=0\) when \(R(r_0)\le\eps\). If \(R(r_0)>\eps\) and \(0<q_\beta<1\), define
\begin{equation}
  k_\eps=
  \left\lceil
  \frac{\log(R(r_0)/\eps)}{-\log q_\beta}
  \right\rceil.
  \label{eq:recovery-rounds}
\end{equation}
If \(R(r_0)>\eps\) and \(q_\beta=0\), define \(k_\eps=1\). In every case,
\begin{equation}
  R(r_0+k)\le\eps+\Phi_\rho
  \qquad\text{for all }k\ge k_\eps.
  \label{eq:finite-recovery}
\end{equation}
\end{theorem}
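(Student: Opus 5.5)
The plan is to apply Lemma~\ref{lem:validity} and Theorem~\ref{thm:range-recurrence} epoch by epoch from \(r_0\), then read the limsup and the recovery time off the unrolled bound~\eqref{eq:unrolled} taken with base epoch \(r_{\mathrm b}=r_0\). No lower-layer property is reconstructed: every epoch \(r\ge r_0\) is clean by assumption, so each single-epoch result holds at every step. Both results use only that the current correct states lie in \([0,1]\), which is guaranteed by the clipping at the start of \(r_0\) and by line~1 and the final clip of Algorithm~\ref{alg:spiketrim} thereafter. Robust validity therefore holds at every clean epoch directly by Lemma~\ref{lem:validity}, independently of how corrupted \(x(r_0)\) was.

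For the asymptotic bound I would use~\eqref{eq:unrolled}, which follows by induction on \(k\) from~\eqref{eq:worst-recurrence}. The only computation needed is the fixed point \(2\beta\rho/(1-q_\beta)=2\rho(n-2f)/(n-3f)\). Since \(n\ge3f+1\), we have \(q_\beta\in[0,1)\), so \(q_\beta^kR(r_0)\to0\), and the limsup is at most \(2\rho(n-2f)/(n-3f)\). The trivial bound \(R\le1\) supplies the other branch of the minimum in~\eqref{eq:asymptotic-bound}, giving~\eqref{eq:limsup}.

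For finite recovery I would drop the factor \((1-q_\beta^k)\le1\) in~\eqref{eq:unrolled} to get
\[
  R(r_0+k)\le q_\beta^kR(r_0)+\frac{2\rho(n-2f)}{n-3f}.
\]
I would then split into the three cases of the statement. If \(R(r_0)\le\eps\), then \(q_\beta^kR(r_0)\le\eps\) for all \(k\ge0\). If \(0<q_\beta<1\) and \(R(r_0)>\eps\), then \(k\ge k_\eps\) gives \(k\,(-\log q_\beta)\ge\log(R(r_0)/\eps)\), hence \(q_\beta^kR(r_0)\le\eps\). If \(q_\beta=0\), the first term vanishes for all \(k\ge1\).

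The main obstacle, which is mild, is the min with \(1\) in \(\Phi_\rho\). The displayed bound gives \(\eps+2\rho(n-2f)/(n-3f)\), and when that exceeds \(\eps+1\) I would instead invoke \(R(r_0+k)\le1\le\eps+1\) directly. Taking whichever bound is smaller yields \(R(r_0+k)\le\eps+\Phi_\rho\), which is~\eqref{eq:finite-recovery}.
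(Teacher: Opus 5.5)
Your proposal is correct and follows essentially the same route as the paper: robust validity comes from Lemma~\ref{lem:validity} at each clean epoch. You then unroll the worst-case recurrence from base epoch \(r_0\), drop the factor \(1-q_\beta^k\), choose \(k_\eps\) so that \(q_\beta^kR(r_0)\le\eps\), and fall back on \(R\le1\) when \(2\rho(n-2f)/(n-3f)>1\). The only difference is the order of the case split (the paper splits on \(c\rho\le1\) first and remarks separately that \(q_\beta=0\) forces \(f=0\), \(\beta=1\)), which is cosmetic.
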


\begin{proof}
State clipping gives \(R(r_0)\le1\) at the first clean epoch. Lemma~\ref{lem:validity} gives robust validity independently of pre-corruption history. Put \(c=2(n-2f)/(n-3f)\). If \(c\rho\le1\), Equation~\eqref{eq:unrolled} with \(r_{\mathrm b}=r_0\) gives
\[
  R(r_0+k)\le q_\beta^kR(r_0)+\Phi_\rho.
\]
The stated definition of \(k_\eps\) makes the first term at most \(\eps\). If \(c\rho>1\), then \(\Phi_\rho=1\), and Equation~\eqref{eq:finite-recovery} follows directly from \(R(r_0+k)\le1\). These arguments also give the limsup bound. When \(q_\beta=0\), necessarily \(f=0\) and \(\beta=1\). Hence \(R(r_0+1)\le2\rho=\Phi_\rho\), including the endpoint case \(\rho=1/2\).
\end{proof}

The finite recovery bound is informative when \(\Phi_\rho<1\).

The worst-case asymptotic range bound in Equation~\eqref{eq:asymptotic-bound} does not depend on \(\beta\). A smaller \(\beta\) slows the provable transient contraction and reduces the additive error injected in each individual step by the same proportion. It can provide implementation inertia against unmodeled stochastic effects, but it does not improve the stated adversarial bound. Therefore the experiments use \(\beta=1\).

\begin{corollary}[Timing budget for a target asymptotic bound]
\label{cor:design-budget}
Assume \(n\ge3f+1\), and let \(0<\tau<1\) be a target upper bound on asymptotic correct-state range. The proved bound in Equation~\eqref{eq:asymptotic-bound} satisfies \(\Phi_\rho\le\tau\) if and only if
\begin{equation}
  \rho\le\frac{\tau(n-3f)}{2(n-2f)}.
  \label{eq:rho-budget}
\end{equation}
The threshold on the right-hand side of Equation~\eqref{eq:rho-budget} is strictly smaller than \(1/2\). Since
\(\rho=\min\{1/2,\omega/L\}\), the condition can hold only in the nonsaturated decoder regime \(\omega<L/2\), where \(\rho=\omega/L\). It is therefore equivalent to
\begin{equation}
  L=T-2g
  \ge
  \frac{2\omega(n-2f)}{\tau(n-3f)}.
  \label{eq:L-budget}
\end{equation}
\end{corollary}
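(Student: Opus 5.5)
The plan is to unpack the definition of \(\Phi_\rho\) in Equation~\eqref{eq:asymptotic-bound} directly. Put \(c=2(n-2f)/(n-3f)\), which is well defined and positive because \(n\ge3f+1\). Then \(\Phi_\rho=\min\{1,c\rho\}\).

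Since \(\tau<1\), the inequality \(\min\{1,c\rho\}\le\tau\) cannot be satisfied by the first argument of the minimum. It therefore holds if and only if \(c\rho\le\tau\). For the forward direction, note that if \(c\rho>\tau\), then both arguments of the minimum exceed \(\tau\). For the reverse direction, \(\min\{1,c\rho\}\le c\rho\le\tau\). Rearranging \(c\rho\le\tau\) gives Equation~\eqref{eq:rho-budget}.

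Next I will verify that the threshold \(\tau(n-3f)/(2(n-2f))\) is strictly below \(1/2\). Because \(f\ge0\), we have \((n-3f)/(n-2f)\le1\), and \(\tau<1\). The product is therefore strictly less than \(1\), and halving it gives a value strictly less than \(1/2\).

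Consequently, any \(\rho\) satisfying Equation~\eqref{eq:rho-budget} satisfies \(\rho<1/2\). By Theorem~\ref{thm:minimax}, \(\rho=\min\{1/2,\omega/L\}\). If \(\omega/L\ge1/2\), then \(\rho=1/2\) and the condition fails. So the condition can hold only when \(\omega/L<1/2\), that is \(\omega<L/2\), and in that case \(\rho=\omega/L\).

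For the final equivalence I will substitute \(\rho=\omega/L\). I will split into the saturated and nonsaturated cases so that both directions are clean.

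In the nonsaturated case \(\omega<L/2\), the condition reads \(\omega/L\le\tau(n-3f)/(2(n-2f))\). Multiplying by the positive quantity \(L\cdot2(n-2f)/(\tau(n-3f))\) gives Equation~\eqref{eq:L-budget}. The step is reversible.

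In the saturated case \(\omega\ge L/2\), I must show that Equation~\eqref{eq:L-budget} also fails, so that the two conditions agree there too. If Equation~\eqref{eq:L-budget} held, we would get \(\omega/L\le\tau(n-3f)/(2(n-2f))<1/2\), contradicting \(\omega\ge L/2\). The degenerate case \(\omega=0\) is covered by the nonsaturated branch.

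The main subtlety is exactly this last case check. The equivalence with Equation~\eqref{eq:L-budget} must hold without presupposing nonsaturation, and the strict inequality from the threshold being below \(1/2\) is what makes that work.
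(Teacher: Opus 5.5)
Your proof is correct and follows the paper's own argument: unpack \(\Phi_\rho=\min\{1,c\rho\}\), use \(\tau<1\), show the threshold is below \(1/2\) to force the nonsaturated regime where \(\rho=\omega/L\), and rearrange to get Equation~\eqref{eq:L-budget}. Your explicit check that Equation~\eqref{eq:L-budget} also fails when \(\omega\ge L/2\) supplies the step the paper leaves implicit, and that step is what makes the equivalence hold without assuming nonsaturation in advance.
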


Corollary~\ref{cor:design-budget} is necessary and sufficient for the derived bound \(\Phi_\rho\) to meet the target \(\tau\). If the condition fails, the present worst-case analysis cannot establish that target for \SpikeTrim. This is not an impossibility theorem for every one-spike protocol.

For fixed \(T\), \(D\), \(\eta\), and \(\omega\), suppose the guard \(g\) can be selected. Equations~\eqref{eq:guard} and \eqref{eq:L-budget} can be satisfied simultaneously if and only if
\begin{equation}
  T>
  2(D+\eta)
  +\frac{2\omega(n-2f)}{\tau(n-3f)}.
  \label{eq:joint-timing-budget}
\end{equation}
In the uncalibrated case, \(\omega=D/2+\eta\). Equation~\eqref{eq:joint-timing-budget} combines the admission guard and value-precision requirements into one physical epoch-budget test.

The denominator \(n-3f\) makes the engineering tradeoff explicit. Operating close to the resilience boundary amplifies timing uncertainty and slows convergence. Adding correct redundancy or reducing residual timing error can be more effective than changing the convex update parameter.

\section{Model-based evaluation}
\label{sec:evaluation}
Our evaluation is structured around six concrete questions that assess the behavior and practical relevance of the proposed interface.

\begin{enumerate}
\item Does \SpikeTrim contract under receiver-dependent Byzantine values while simpler aggregators fail or converge more slowly?
\item Does reduced control-state disagreement also reduce disagreement in the predictions of otherwise identical spiking classifiers?
\item Does the late-epoch range grow with the decoder error radius and stay below the worst-case theorem bound?
\item Does the deterministic construction exhibit the \(3f+1\) feasibility boundary and the predicted noiseless epoch count?
\item Does per-sender \FirstSpike admission remove the influence of burst multiplicity?
\item Does the agreement layer recover after transient state corruption?
\end{enumerate}

Our simulator\footnote{The source code is hosted in our private GitHub repository that will be publicly
available upon the acceptance of the manuscript.} follows the model exactly. A correct link independently draws a delivery delay uniformly from \([0,D]\) and a timestamp error uniformly from \([-\eta,\eta]\). The receiver subtracts \(D/2\), so \(\omega=D/2+\eta\), and applies Equation~\eqref{eq:decoder}. Byzantine labels use a split-extremes strategy. They send decoded value zero to receivers in a lower group and value one to receivers in an upper group. This receiver-dependent behavior drives disagreement and is permitted by the model.

Random experiments use 200 executions generated with NumPy Generator objects and the PCG64 bit generator. Fixed experiment-level seeds are recorded in the source code. Compared stochastic methods share the same initial conditions and channel samples. The correct initial states are independent \(\operatorname{Unif}[0,1]\) samples. Curves report medians. Shaded regions show the empirical 10th to 90th percentiles where visible.
Table~\ref{tab:parameters} gives the default parameters. The time unit is normalized. The default setting has \(L=0.8\), \(\omega=0.02\), \(\rho=0.025\), and worst-case asymptotic range bound \(\Phi_\rho=0.075\). These defaults were chosen to make both the transient and timing-limited regimes visible. A \(20\%\) Byzantine fraction remains safely inside the \(3f+1\) region and, with \(\beta=1\), gives \(q_\beta=1/3\). The values \(g=0.1\), \(D=0.02\), and \(\eta=0.01\) comfortably satisfy Equation~\eqref{eq:guard} and yield a nontrivial but nonsaturated decoder radius. The paired design reduces run-to-run variation between methods, and 200 executions support the reported median and percentile summaries.

\begin{table}[h]
\centering
\caption{Default simulation parameters. Individual sweeps vary the quantities identified in the text.}
\label{tab:parameters}
\begin{tabular}{lll}
\toprule
Quantity & Value & Meaning \\
\midrule
\(n\) & 40 & sender labels \\
\(f\) & 8 & Byzantine labels and trim parameter \\
\(T\) & 1.0 & epoch duration \\
\(g\) & 0.1 & guard at each endpoint \\
\(D\) & 0.02 & maximum correct-link delay \\
\(\eta\) & 0.01 & timestamp-error radius \\
\(\beta\) & 1.0 & direct MSR update \\
Monte Carlo executions & 200 & paired executions \\
\bottomrule
\end{tabular}
\end{table}

\paragraph{Physical interpretation of the normalized times}
The model depends on ratios of physical times and is therefore scale invariant. Assigning \(T=\SI{1}{\milli\second}\) to the default point gives \(g=\SI{100}{\micro\second}\), \(D=\SI{20}{\micro\second}\), \(\eta=\SI{10}{\micro\second}\), and \(L=\SI{800}{\micro\second}\). The guard condition then has \(\SI{70}{\micro\second}\) of slack. The dimensionless values \(\rho=0.025\) and \(\Phi_\rho=0.075\) are unchanged.

Published platform timing figures place this illustrative scaling in context. At \(\SI{0.75}{\volt}\), Loihi reports pre-silicon standard-delay-format (SDF) and SPICE values of \(\SI{2.1}{\nano\second}\) for within-tile spike latency, \(\SI{4.1}{\nano\second}\) and \(\SI{6.5}{\nano\second}\) for east to west and north to south tile hops, and \(\SI{113}{\nano\second}\) to \(\SI{465}{\nano\second}\) for mesh-wide barrier synchronization from one to 32 tiles~\cite{Davies2018Loihi}. The SpiNNaker experiment reports an uncongested intra-chip round-trip delay of \(\SI{0.825}{\micro\second}\), including API software overhead, and a maximum observed round-trip delay of \(\SI{6.5}{\micro\second}\) under the tested congestion and router settings~\cite{Lagorce2015SpiNNaker}. The DYNAPs prototype reports a measured through-chip pass-through latency of \(\SI{15.4}{\nano\second}\) and a \(\SI{27}{\nano\second}\) broadcast time set from worst-case content-addressable-memory (CAM) timing assumptions~\cite{Moradi2018DYNAPs}. None of these figures is a deterministic end-to-end bound for our model, and none measures \(\eta\). A deployment must bound the full correct-route delay \(D\) at the intended load, include nominal-delay calibration residual in \(\delta_d\), and include timestamp quantization and residual epoch skew in \(\eta\).

As a conditional scale check, suppose a deployment with \(T=\SI{1}{\milli\second}\) and \(g=\SI{100}{\micro\second}\) validates \(D=\SI{6.5}{\micro\second}\) and \(\eta=\SI{1}{\micro\second}\). For \(n=40\) and \(f=8\), the uncalibrated model gives \(\rho=0.0053125\) and \(\Phi_\rho=0.0159375\). The proved bound would then meet a target \(\tau=0.02\). These assumed bounds are not inferred from the round-trip figures above. This substitution is illustrative rather than a claim about SpiNNaker under unmeasured routes or loads.

We compare four sender-indexed aggregators. \SpikeTrim uses the noisy timing channel and MSR. \emph{Ideal trimmed mean} applies the same MSR rule to exact scalar messages and isolates the timing cost. \emph{Naive mean} averages all decoded sender values without trimming. \emph{Median} takes the sender-indexed median. For even \(n\), the median is the arithmetic mean of the two central order statistics, matching the NumPy implementation used in the experiments. Trimmed means and coordinate-wise medians are standard Byzantine-robust aggregation baselines~\cite{Dolev1986ApproxAgree,yin2018byzantine}. We make no general optimality claim about median dynamics.

The baseline set answers controlled mechanism questions rather than ranking all robust aggregators. The ideal trimmed mean is the direct oracle comparison because it differs from \SpikeTrim only by replacing the timing channel with exact scalar messages. Another complete-graph MSR baseline would duplicate that rule. Sparse-graph W-MSR variants become distinct only after changing the communication topology. Krum was designed for stochastic-gradient aggregation and a learning-convergence objective rather than iterative scalar validity and range contraction~\cite{Blanchard2017Krum}. The pulse-coupled methods cited above target phase and frequency synchronization~\cite{iori2024resilient,YanIshii2025SecurePCO}. We therefore treat these methods as adjacent work rather than compare numerical results for non-equivalent tasks.

A separate ablation uses \emph{raw-event trimming}. It treats every observed event as a distinct sample while still removing only \(f\) low and \(f\) high samples. This deliberately violates the one-value-per-sender condition.

\subsection{Convergence and baseline behavior}

Figure~\ref{fig:convergence} shows the correct-node range for the default setting. All methods start from the same median range \(0.952\). \SpikeTrim reduces it to \(0.248\) after one epoch, \(0.018\) after five epochs, and approximately \(0.016\) after 50 epochs. The ideal trimmed mean reaches \(0.003\) after five epochs and numerical zero later. The difference between the two curves isolates the timing channel.

The naive mean settles near \(0.205\). The split-extremes adversary creates a persistent receiver-dependent bias because no values are removed. Under this experiment, the median baseline eventually reaches a range close to \SpikeTrim, but its transient is slower. Its median range across runs remains \(0.123\) at epoch 10 and reaches \(0.016\) near epoch 25.

\begin{figure}[t]
  \centering
  \includegraphics[width=0.75\linewidth]{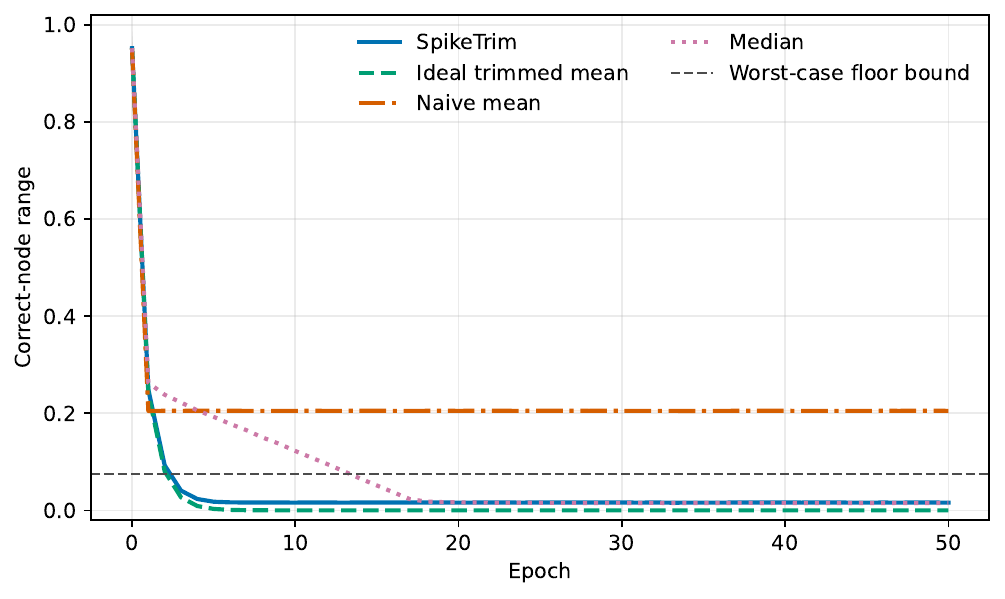}
  \caption{Paired convergence experiment with \(n=40\), \(f=8\), \(D=0.02\), and \(\eta=0.01\). The dashed horizontal line is the worst-case asymptotic range bound \(0.075\). The median across stochastic runs is lower because the simulation does not choose worst-case timing errors in every epoch.}
  \label{fig:convergence}
\end{figure}

The theorem line is an adversarial upper bound, not a prediction of the typical late-epoch range under independent noise. The observed range near \(0.016\) is therefore consistent with the bound but does not establish its tightness.

\subsection{Task-level prediction consistency}
\label{sec:task-level}

We next tested whether reducing disagreement among replicated threshold-reference
states is associated with more consistent predictions across otherwise
identical spiking classifiers. This is a controlled mechanism stress test rather than a competitive neural benchmark. The task used the 1797-sample handwritten-digits data set distributed with scikit-learn~\cite{pedregosa2011scikit}. Pixel intensities were divided by 16. We trained five current-driven spiking classifiers on stratified splits generated with seeds 1201 to 1205. Each split used 75\% training data and 25\% test data. Each test split contained 450 samples. Each network had 64 inputs, 128 hidden leaky integrate-and-fire (LIF) neurons,
and 10 output neurons. The simulation used 20 steps, membrane decay \(0.9\), and subtractive reset.
Here, learning denotes only the offline optimization of the classifier parameters, for which a sigmoid surrogate of slope 10 supplies a smooth proxy derivative at the spike discontinuity~\cite{neftci2019surrogate,ferdowsi2025silicon}. Hard threshold crossings are retained in the forward LIF simulation. Both the \SpikeTrim state evolution and the reported test-time predictions are computed directly under their stated rules.
Adam training used 35 epochs, learning rate \(10^{-3}\), weight decay \(10^{-4}\), and batch size 128. Training used the nominal threshold multiplier one. The predicted class maximized the output spike count plus \(0.05\) times the final output membrane potential.

Within each evaluation, all correct tiles used the same trained weights and test examples and differed only in their local threshold state. The unquantized nominal threshold corresponded to \(x=0.5\). For this stress test, a tile state \(x\) set the common hidden and output threshold multiplier to \(0.25+1.5x\). The task-level threshold register used 8-bit uniform quantization. This range was chosen to expose the functional effect of control-state disagreement and is not presented as a measured hardware mismatch distribution.

The no-coordination baseline left every initial tile state unchanged. Each trained network was paired with the same 200 communication executions. Mean accuracy is averaged over correct tiles. Worst accuracy is the minimum over correct tiles. Pairwise prediction disagreement is the mean fraction of test samples assigned different classes by an unordered pair of correct tiles.

The accuracy summaries are reported at epoch 10. Pairwise disagreement is reported at epochs 10 and 50. The 1000 evaluations arise from five trained networks crossed with 200 shared communication executions. They do not represent 1000 independently trained models.

Table~\ref{tab:task-level} reports pooled descriptive summaries from the paired evaluations.

\begin{table}[h]
\centering
\caption{Task-level results under the default Byzantine setting. Timing-channel methods use the default timing parameters. Entries are pooled empirical medians followed by 10th to 90th percentile ranges, all in percent. The ranges are descriptive and are not confidence intervals.}
\label{tab:task-level}
{\footnotesize
\setlength{\tabcolsep}{4pt}
\begin{tabular}{lcc}
\toprule
Method & \shortstack{Mean accuracy\\at \(r=10\)} & \shortstack{Worst-tile accuracy\\at \(r=10\)} \\
\midrule
Ideal trimmed mean & \(97.778\;(97.333\text{ to }98.889)\) & \(97.778\;(97.333\text{ to }98.889)\) \\
\SpikeTrim & \(97.694\;(97.271\text{ to }98.861)\) & \(97.556\;(97.111\text{ to }98.667)\) \\
Median & \(97.660\;(97.277\text{ to }98.785)\) & \(97.333\;(96.889\text{ to }98.667)\) \\
Naive mean & \(97.590\;(97.333\text{ to }98.785)\) & \(97.556\;(96.889\text{ to }98.667)\) \\
No coordination & \(96.750\;(95.777\text{ to }97.440)\) & \(86.222\;(79.556\text{ to }93.778)\) \\
\bottomrule
\end{tabular}

\medskip

\begin{tabular}{lcc}
\toprule
Method & \shortstack{Pairwise disagreement\\at \(r=10\)} & \shortstack{Pairwise disagreement\\at \(r=50\)} \\
\midrule
Ideal trimmed mean & \(0.000\;(0.000\text{ to }0.000)\) & \(0.000\;(0.000\text{ to }0.000)\) \\
\SpikeTrim & \(0.194\;(0.027\text{ to }0.355)\) & \(0.193\;(0.039\text{ to }0.347)\) \\
Median & \(0.334\;(0.086\text{ to }0.620)\) & \(0.185\;(0.014\text{ to }0.345)\) \\
Naive mean & \(0.422\;(0.114\text{ to }0.681)\) & \(0.396\;(0.114\text{ to }0.688)\) \\
No coordination & \(2.838\;(1.471\text{ to }4.573)\) & \(2.838\;(1.471\text{ to }4.573)\) \\
\bottomrule
\end{tabular}
}
\end{table}

At the ten-epoch budget, the pooled median pairwise disagreement is \(0.194\%\) for \SpikeTrim, \(0.334\%\) for the median rule, and \(0.422\%\) for the naive mean. The pooled median worst-tile accuracy is \(97.556\%\) for \SpikeTrim and \(97.333\%\) for the median rule. Without coordination, it is \(86.222\%\). At epoch 50, the pooled median disagreement is \(0.193\%\) for \SpikeTrim and \(0.185\%\) for the median rule. Ideal trimmed mean remains the oracle comparison because it receives exact scalar messages and incurs no timing error. These pooled summaries do not support an inferential or general superiority claim. In this stress test, faster scalar-state contraction was associated with lower pairwise prediction disagreement at the ten-epoch budget.

\subsection{Timing uncertainty}

Figure~\ref{fig:noise} varies \(\eta\) from zero to \(0.04\), which changes \(\rho\) from \(0.0125\) to \(0.0625\). For each run, the late-epoch statistic is the average range over the final ten epochs of an 80-epoch simulation. The median grows from \(0.0115\) to \(0.0475\). Every median point remains below the corresponding worst-case bound, which grows linearly from \(0.0375\) to \(0.1875\).

\begin{figure}[t]
  \centering
  \includegraphics[width=0.7\linewidth]{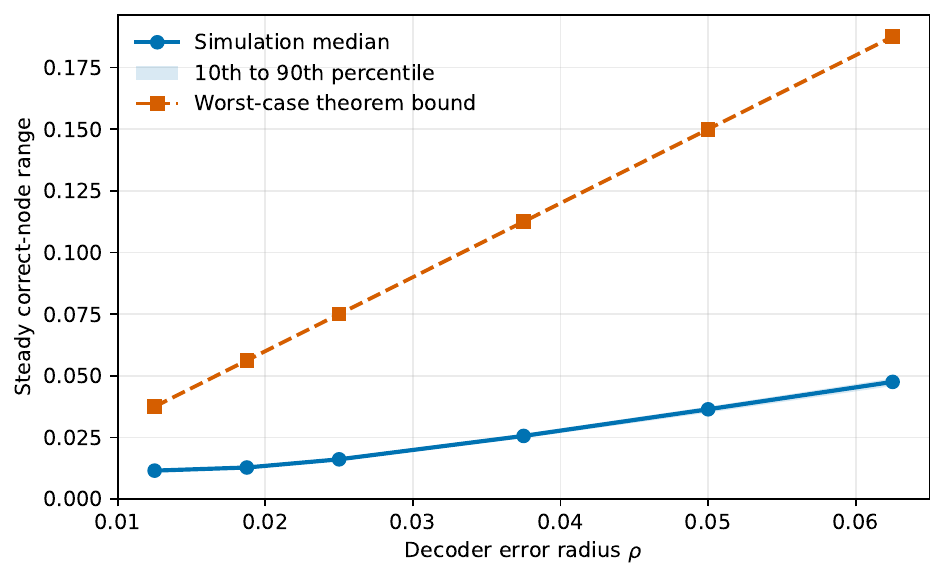}
  \caption{Late-epoch correct-node range against the exact decoder radius \(\rho\). The square-marked dashed curve is the theorem bound in Equation~\eqref{eq:asymptotic-bound}. The experiment uses independent bounded timing samples, while the theorem permits adversarial errors in every epoch.}
  \label{fig:noise}
\end{figure}

The experiment supports the predicted dependence on \(\rho\). It also shows the expected gap between a deterministic worst-case guarantee and typical independent noise.

\subsection{Fault threshold and contraction count}

To isolate the fault threshold, we set \(n=61\), remove timing noise, and vary \(f\) from zero to 25. Correct nodes start in the tight two-level construction used in Theorem~\ref{thm:tight-contraction}. The Byzantine labels send low values to the lower receiver group and high values to the upper group.

The left panel of Figure~\ref{fig:threshold} shows the range after 140 epochs. It converges throughout the feasible region \(f\le20\), where \(61\ge3f+1\), and remains one for \(f\ge21\). At \(f=20\), the contraction factor is \(20/21\), so convergence is deliberately slow. The range reaches \(10^{-2}\) after 95 epochs. The right panel compares the observed first hitting time with Equation~\eqref{eq:noiseless-rounds}. The two are identical for this tight deterministic construction.

\begin{figure}[t]
  \centering
  \includegraphics[width=0.8\linewidth]{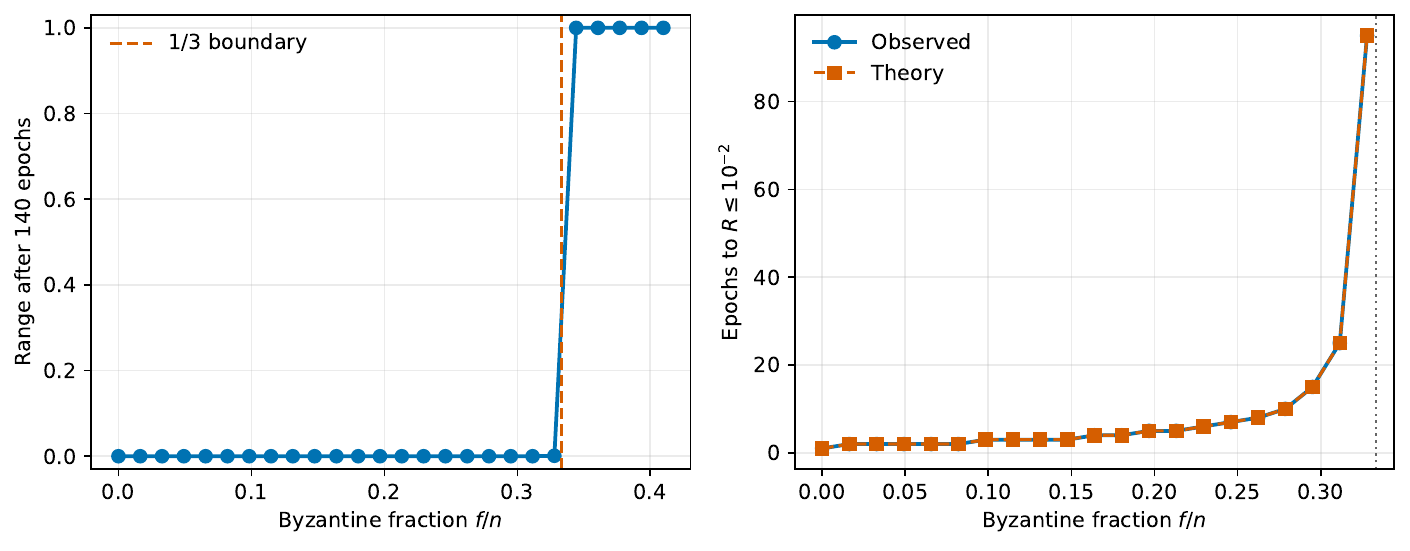}
  \caption{Deterministic threshold experiment with \(n=61\) and no timing error. Left, the correct range after 140 epochs. Right, observed epochs to \(R\le10^{-2}\) and the exact count from the tight contraction recurrence on the feasible side.}
  \label{fig:threshold}
\end{figure}

This experiment illustrates both parts of the theory. The resilience boundary is sharp, and performance deteriorates as \(n-3f\) approaches one.

\subsection{Sender flooding and \FirstSpike}

The burst ablation uses \(n=31\), \(f=8\), and noiseless timestamps. The horizontal axis counts additional events emitted by each faulty label. Thus, the first nonzero point gives two total faulty events per label and per epoch. \SpikeTrim admits only the first event. Raw-event trimming admits all of them but still trims only eight values at each end.

Figure~\ref{fig:burst} shows that \SpikeTrim reaches numerical agreement for every tested multiplicity. Raw-event trimming has range \(0.533\) with one additional event and grows to \(0.932\) with twelve additional events. The first failure point matches Observation~\ref{obs:raw-events}. The ablation isolates the role of sender-indexed admission rather than the role of trimming itself.

\begin{figure}[t]
  \centering
  \includegraphics[width=0.76\linewidth]{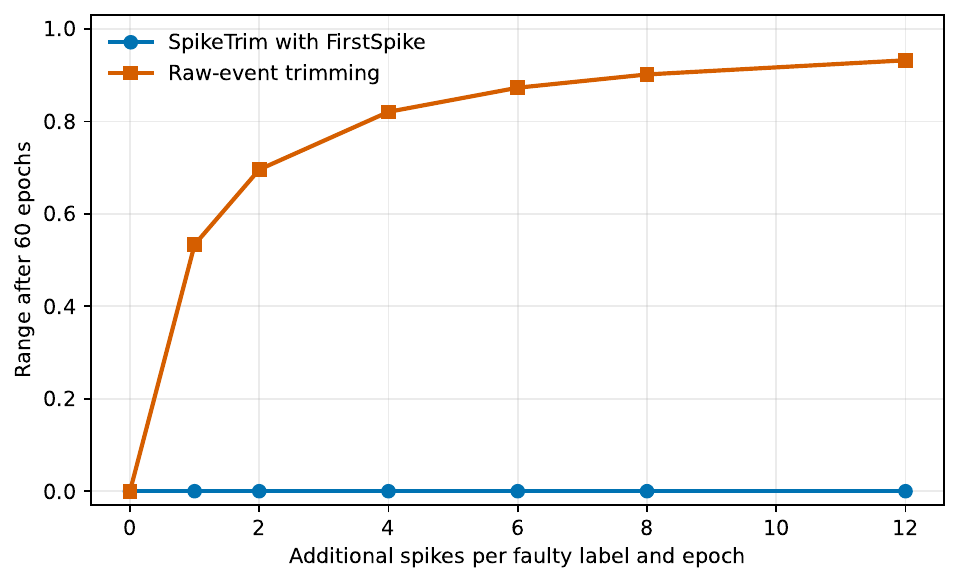}
  \caption{Burst-multiplicity ablation. \FirstSpike keeps one effective value per sender label. Raw-event trimming allows one faulty label to consume multiple positions in the aggregate and loses agreement as soon as each faulty label contributes two total events.}
  \label{fig:burst}
\end{figure}

\subsection{Recovery after transient corruption}

Finally, we corrupt 13 of the 32 correct states, approximately 40 percent, at epoch 25 by replacing them with independent values in \([0,1]\). The paired \SpikeTrim and naive-mean runs receive the same initial states, channel samples, and corruption pattern.

The median correct range jumps from \(0.0163\) to \(0.8675\). \SpikeTrim reduces it to \(0.1444\) after one recovery update, \(0.0578\) after two, and \(0.0170\) after five. The naive mean immediately returns to its Byzantine-biased range near \(0.205\) and remains there. Figure~\ref{fig:recovery} illustrates the agreement-layer recovery guaranteed by Theorem~\ref{thm:self-stabilization}.

\begin{figure}[t]
  \centering
  \includegraphics[width=0.75\linewidth]{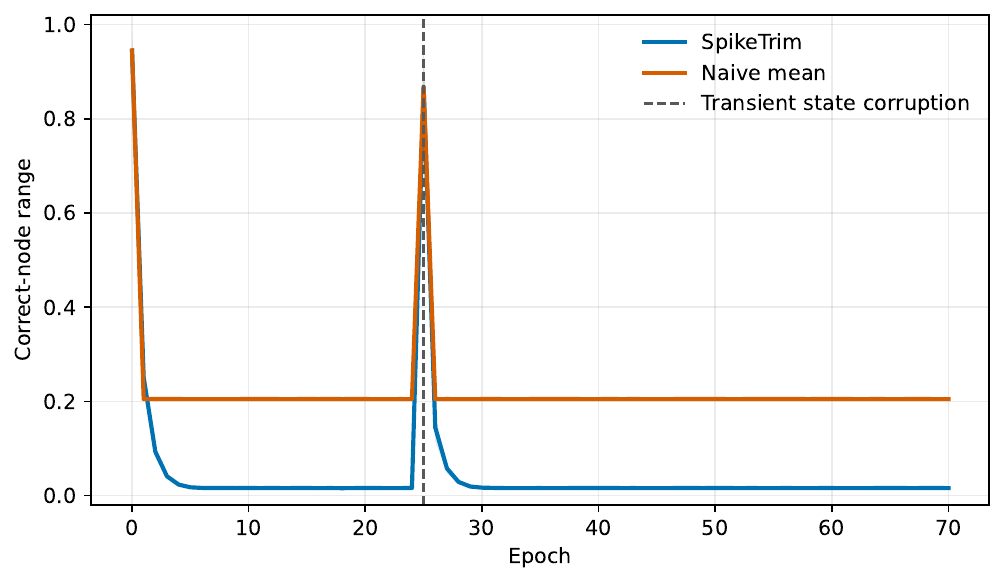}
  \caption{Recovery after transient corruption of 13 of the 32 correct scalar states at epoch 25. \SpikeTrim returns to its timing-limited regime. The experiment assumes that the \SIF timing, identity, and epoch services remain clean.}
  \label{fig:recovery}
\end{figure}
\section{Implementation implications and limitations}
\label{sec:discussion}

\subsection{Communication and local cost}

Each correct node emits one logical control event per epoch. Every correct receiver admits at most one event for each of the \(n\) sender labels. The correct broadcasts create \(O(n^2)\) logical sender-receiver observations. Efficient multicast can reduce physical link transmissions, but it does not change the number of receiver-side admissions. Byzantine flooding can create additional physical traffic, which must be controlled below \SIF as assumed in Section~\ref{sec:timing-model}.

A straightforward receiver stores \(n\) decoded values and sorts them in \(O(n\log n)\) time. The two trim thresholds can instead be found by linear-time selection, followed by one pass over the retained values. The memory cost remains \(O(n)\). Sender labels and epoch identifiers are metadata assumptions. On an address-event fabric that already transports source addresses, the scalar itself adds no payload bits, but the method is not an unlabeled bare-spike protocol.

The local sentinel does not require a transmitted event. Nominal-delay subtraction can use a calibrated per-link value or the midpoint of a known interval. Timestamp quantization contributes to \(\eta\). Equation~\eqref{eq:L-budget} then translates the measured residual uncertainty into a minimum usable timing window.

\subsection{Where the abstraction applies}

The strongest use case is a paced, sender-addressed neuromorphic fabric that already has a low-rate calibration or maintenance epoch. The state should be a slow scalar whose copies need mutual consistency. It need not be a membrane potential and the proof does not assume a leaky integrate-and-fire equation. The convex parameter \(\beta\) is an algorithmic relaxation.

If digital scalar packets are reliable and inexpensive, they avoid the timing precision floor. If the system is fully asynchronous, then a one-spike time coordinate lacks a shared origin and a different communication model is required. If source labels can be forged, one physical component can create Sybil values and the \(f\)-trimming proof fails.

\subsection{Limitations}

The theory assumes a fixed Byzantine set after the first clean epoch, complete logical visibility, a clean pacing and identity layer, and deterministic bounds on correct timing error. It does not provide physical denial-of-service protection before \FirstSpike filtering. It does not cover sparse graphs, vector-valued states, mobile faults, or time-varying membership. For \(\rho>0\), it provides one-step input-relatedness and asymptotic range agreement rather than global validity relative to the initial input hull. Remark~\ref{rem:cumulative-drift} gives the finite-horizon location bound and shows why periodic local re-estimation and calibration remain necessary.

The exact decoder lower bound applies to one deterministic observation in one epoch. Multi-spike and multi-epoch codes may improve precision by spending more time or events. The independent centered timing samples used in the simulations do not imply a stochastic location-drift theorem. Interval decoding, boundary clipping, order-statistic selection, and Byzantine values need not preserve conditional unbiasedness. A martingale-type result would require additional assumptions on decoded errors, the operating interval, and Byzantine behavior.

The analysis assumes exact real-valued storage and arithmetic for the agreement state. Finite-precision update error would introduce an additional perturbation in the range recurrence and must be accounted for in a hardware implementation.

The classifier experiment is a mechanism stress test on the \(8\times8\) scikit-learn digits data set, chosen to isolate the control-plane effect without the confounding influence of a large training pipeline or hardware-specific optimization. It is not a competitive SNN benchmark and does not establish generalization to MNIST, Fashion-MNIST, N-MNIST, or DVS-Gesture. No inferential comparison across independently trained model populations is reported. The experiments do not measure chip energy, router congestion, hardware mismatch distributions, or closed-loop hardware behavior.

\section{Conclusion}
\label{sec:conclusion}

A sender-labeled spike time is not automatically a Byzantine approximate-agreement message. Delay perturbs its value, silence removes an entry, flooding creates too many entries, and equivocation gives different receivers different timings. \SIF makes the required interface explicit. It combines paced epochs, sender attribution, per-label \FirstSpike admission, bounded timing uncertainty, and a sentinel for silence.

For this affine channel, midpoint feasible-set decoding gives the exact radius \(\rho=\min\{1/2,\omega/(T-2g)\}\). \SpikeTrim then applies the classical MSR trimmed mean to one effective value per label. For \(n\ge3f+1\), it has one-step robust validity and recovers the classical tight noiseless contraction factor under the direct update. It also has an explicit worst-case asymptotic range bound and a geometric recovery bound after agreement-layer state corruption once the interface assumptions hold again. The controlled task benchmark illustrates an association between faster alignment and lower cross-tile prediction disagreement under a finite maintenance budget. Near the Byzantine threshold, both convergence time and timing-noise amplification become severe. The closed-form criterion therefore tests whether the proved \SpikeTrim bound meets a required control-plane target. It does not replace platform measurement or establish an impossibility result for every one-spike protocol.

Our immediate next step is hardware validation. We plan to build the complete control path as a prototype on a field-programmable gate array (FPGA). The prototype will expose timestamping, epoch tagging, per-label \FirstSpike admission, sentinel insertion, decoding, and MSR as independently measurable modules. The subsequent step is deployment on sender-addressed neuromorphic hardware. That study will measure \(D\), \(\eta\), guard slack, label integrity, congestion sensitivity, endpoint memory, communication energy, and closed-loop calibration behavior. It will also extend the task study to a larger image benchmark and an event-based benchmark. These measurements will turn Corollary~\ref{cor:design-budget} from an illustrative calculation into a platform-specific engineering verdict. Other extensions include graph-robust MSR over sparse event networks, adaptive calibration of \(\bar d_{ij}\), and stochastic error bounds under explicit distributional assumptions.

\section*{Funding}

\noindent
The research of Arman Ferdowsi was funded by the Austrian Science Fund (FWF) \href{https://www.fwf.ac.at/en/research-radar/10.55776/ESP1705325}{10.55776/ESP1705325} (\href{https://ucrisportal.univie.ac.at/en/projects/symbolische-zeitanalyse-asynchroner-schaltungen/}{STAAC Project}).
The research of Atakan Aral was funded in part by CHIST-ERA-22-SPiDDS-07 (\href{https://troci.holisun.com/}{TROCI Project}) and by the Austrian Science Fund (FWF) \href{https://doi.org/10.55776/I6647}{10.55776/I6647}.

%

\section*{Declaration of competing interest}

\noindent
The authors declare that they have no known competing financial interests or personal relationships that could have appeared to influence the work reported in this paper.

%

\begingroup
\small
\bibliographystyle{elsarticle-num}
\bibliography{references}
\endgroup
\end{document}